\DeclareMathAlphabet{\mathcal}{OMS}{cmsy}{m}{n}
\title{Combined Covers and Beth Definability}
\author{Diego Calvanese$^1$, Silvio Ghilardi$^2$,  Alessandro Gianola$^{1,3}$, \\
 Marco Montali$^1$, Andrey Rivkin$^1$}
\institute{%
 $^1$Faculty of Computer Science, Free University of Bozen-Bolzano, Bolzano, Italy\\
\{calvanese, gianola, montali, rivkin\}@inf.unibz.it \\
 $^2$Dipartimento di Matematica, Universit\`a degli Studi di Milano, Milan, Italy\\
 silvio.ghilardi@unimi.it\\
 $^3$CSE Department, University of California San Diego, San Diego (CA), USA\\
agianola@eng.ucsd.edu
}
\newcommand{\tup}[1]{\langle #1\rangle}            
\newcolumntype{C}{>{\centering\arraybackslash}X}
\g@addto@macro\normalsize{%
\setlength{\abovecaptionskip}{-2pt}
\setlength\abovedisplayskip{3pt}
\setlength\belowdisplayskip{3pt}
\setlength\abovedisplayshortskip{3pt}
\setlength\belowdisplayshortskip{3pt}
}
\newcounter{dummy} 
\newcounter{dummy1} 
\newcounter{dummy2}
\newcounter{dummy3} 
\newcounter{dummy5} 
\newcounter{dummy6}
\newtheorem{theorem}[dummy]{Theorem}
\newtheorem{lemma}[dummy1]{Lemma}
\newtheorem{definition}[dummy2]{Definition}
\newtheorem{proposition}[dummy3]{Proposition}
\newtheorem{remark}[dummy5]{Remark}
\theoremstyle{nonumberplain}
\newtheorem{proof}[dummy6]{Proof}
\newcommand{\ua}{\ensuremath{\underline a}}
\newcommand{\ub}{\ensuremath{\underline b}}
\newcommand{\uc}{\ensuremath{\underline c}}
\newcommand{\ue}{\ensuremath{\underline e}}
\newcommand{\ut}{\ensuremath{\underline t}}
\newcommand{\ux}{\ensuremath{\underline x}}
\newcommand{\uy}{\ensuremath{\underline y}}
\newcommand{\uz}{\ensuremath{\underline z}}
\newcommand{\um}{\ensuremath{\underline m}}
\newcommand{\ueta}{\ensuremath{\underline \eta}}
\newcommand{\uxi}{\ensuremath{\underline \xi}}
\newcommand{\cA}{\ensuremath \mathcal A}
\newcommand{\cM}{\ensuremath \mathcal M}
\newcommand{\cN}{\ensuremath \mathcal N}
\newcommand{\cS}{\ensuremath \mathcal S}
\newcommand{\cI}{\ensuremath \mathcal I}
\newcommand{\lra}{\longrightarrow}
\renewcommand{\int}{\ensuremath {\mathcal I}}
\newcommand{\EUF}{\ensuremath{\mathcal{EUF}}}
\newcommand{\IDL}{\ensuremath{\mathcal{IDL}}}
\definecolor{deepblue}{HTML}{0C3B80}
\definecolor{deepgreen}{HTML}{2EA601}
\definecolor{lightOrange}{HTML}{FFA03C}
\definecolor{darkOrange}{HTML}{F1800A}
\definecolor{lightBlue}{HTML}{0174CD}
\definecolor{greenF}{HTML}{2CBB5C}
\definecolor{cyan}{HTML}{86A6D5}
\tikzstyle{sortnode} = [
\tikzstyle{functnode} = [
\tikzstyle{idnode} = [
\tikzstyle{valnode} = [
\tikzstyle{f} = [
\tikzstyle{fd} = [
\tikzstyle{relation}=[rectangle split, rectangle split parts=#1, rectangle split part align=base, draw, anchor=center, align=center, text height=3mm, font=\bfseries, text centered]
\tikzstyle{sortnode} = [
\tikzstyle{functnode} = [
\tikzstyle{idnode} = [
\tikzstyle{artnode} = [
\tikzstyle{valnode} = [
\tikzstyle{f} = [
\tikzstyle{fd} = [
\tikzstyle{relation}=[rectangle split, rectangle split parts=#1, rectangle split part align=base, draw, anchor=center, align=center, text height=3mm, font=\bfseries, text centered]
\begin{document}

\title{Combined Covers and Beth Definability (Extended Version)}


\maketitle


\begin{abstract}
Uniform interpolants were largely studied in non-classical propositional logics since the nineties, and their connection to model completeness was pointed out in the literature. A successive parallel research line inside the automated reasoning community investigated uniform quantifier-free interpolants (sometimes referred to as ``covers'') in first-order theories. In this paper, we investigate cover transfer to theory combinations in the disjoint signatures case. We prove that, for convex theories, cover algorithms can be transferred to theory combinations under the same hypothesis needed to transfer quantifier-free interpolation (i.e., the
equality interpolating property, aka strong amalgamation property). The key feature of our algorithm relies on the extensive usage of the Beth definability property for primitive fragments to convert implicitly defined variables into their explicitly defining terms. In the non-convex case, we show by a counterexample that covers may not exist in the combined theories, 
even in case combined quantifier-free interpolants do exist. However, we exhibit a cover transfer algorithm operating also in the non-convex case for special kinds of theory combinations; these combinations (called `tame combinations') concern multi-sorted theories arising in many model-checking applications (in particular, the ones oriented to verification of data-aware processes).
\end{abstract}

\section{Introduction}

Uniform interpolants were originally studied in the context of non-classical logics, starting from the pioneering work by Pitts~\cite{pitts}. We briefly recall what uniform interpolants are; we fix a logic or a theory $T$ and a suitable fragment (propositional, first-order quantifier-free, etc.) of its language $L$.
Given an $L$-formula $\phi(\ux, \uy)$ (here $\ux,\uy$ are the  variables occurring free in $\phi$), a \emph{uniform interpolant} of $\phi$ (w.r.t.~$\uy$) is a formula $\phi'(\ux)$ where only the $\ux$ occur free, and that satisfies the following two properties:
\begin{inparaenum}[\it (i)]
	\item $\phi(\ux, \uy)\vdash_T \phi'(\ux)$; 
	\item for any further $L$-formula $\psi(\ux, \uz)$ such that $\phi(\ux, \uy) \vdash_T \psi(\ux, \uz)$, we have $\phi'(\ux) \vdash_T \psi(\ux, \uz)$. 
 \end{inparaenum}
Whenever uniform interpolants exist, one can compute an interpolant for an entailment like $\phi(\ux, \uy) \vdash_T \psi(\ux, \uz)$ in a way that is \emph{independent} of $\psi$.

The existence of uniform interpolants is an exceptional phenomenon, which is however not so infrequent; it has been extensively studied in non-classical logics starting from the nineties, as witnessed by a large literature (a non-exhaustive list includes
\cite{shavrukov,visser,GZ,GZsl,GZjsl,nf,bilkova,M,MK}).  
The main results from the above papers are that uniform interpolants exist for intuitionistic logic and for some modal systems (like  the G\"odel-L\"ob system and the S4.Grz system); they do not exist for instance in $S4$ and $K4$, whereas for the basic modal system $K$ they exist for the local consequence relation but not for the global consequence relation. 
The connection between uniform interpolants and model completions (for equational theories axiomatizing the varieties corresponding to propositional logics) was first stated in~\cite{GZapal} and further developed in~\cite{GZ,M,MK}.

In the last decade, also the automated reasoning community developed an increasing interest in uniform interpolants, with particular focus on quantifier-free fragments of first-order theories. This is witnessed by various talks and drafts by D. Kapur presented in 
many
conferences and workshops (FloC 2010, ISCAS 2013-14, SCS 2017~\cite{kapur}), as well as by the paper~\cite{GM} by Gulwani and Musuvathi in ESOP 2008. In this last paper uniform interpolants were renamed as \emph{covers}, a terminology we shall adopt in this paper too.  In these contributions, examples of cover computations were supplied and also some algorithms were sketched. The first formal \emph{proof} about existence of covers in \EUF\ was however published by the present authors only in~\cite{cade19}; such a proof was equipped with powerful semantic tools 
(the Cover-by-Extensions Lemma~\ref{lem:cover} below) coming from the connection to model-completeness, as well as with an algorithm relying on a constrained variant of the Superposition Calculus (two 
simpler 
algorithms are studied in~\cite{GGK}).
 The usefulness of covers in model checking was already stressed in~\cite{GM} and further motivated by our recent line of research on the verification of data-aware processes \cite{CGGMR19,BPM19,MSCS20,ARCADE}. Notably, it is also operationally mirrored in the \textsc{MCMT} \cite{mcmt} implementation since version 2.8. Covers (via quantifier elimination in model completions and hierarchical reasoning)  play an important role in symbol elimination problems in theory extensions, as witnesssed in the comprehensive paper~\cite{viorica18} and in related papers~\cite{viorica19} studying invariant synthesis in model checking applications.

An important question suggested by the applications
is the cover transfer problem for combined theories: for instance, when modeling and verifying data-aware processes, it is natural to consider the combination of different theories, such as the theories accounting for the read-write and read-only data storage of the process as well as those for the elements stored therein~\cite{CGGMR19,cade19,ARCADE}.
Formally, the cover transfer problem can be stated as follows:
\emph{by supposing that covers exist in theories $T_1, T_2$, under which conditions do they exist also in the combined theory  $T_1\cup T_2$?}
In this paper we show that the answer is affirmative in the disjoint signatures convex case, using the same hypothesis (that is, the equality interpolating condition) under which quantifier-free interpolation transfers. Thus, for convex theories we essentially obtain a necessary and sufficient condition, in the precise sense captured by Theorem~\ref{thm:necessary} below. 
We also prove that if convexity fails, the non-convex equality interpolating property~\cite{bgr-acmtocl} may not be sufficient to ensure the cover transfer property. As a witness for this, we show that \EUF\ combined with integer difference logic or with linear integer arithmetics constitutes a counterexample.

The main tool employed in our combination result is the \emph{Beth definability theorem for primitive formulae} (this theorem has been shown to be equivalent to the equality interpolating condition in~\cite{bgr-acmtocl}). In  order to design a combined cover algorithm, we exploit the equivalence between implicit and explicit definability that is supplied by the Beth theorem. Implicit definability is reformulated, via covers for input theories, at the quantifier-free level. Thus, the combined cover algorithm guesses the implicitly definable variables, then eliminates them via explicit definability, and finally uses the component-wise input cover algorithms to eliminate the remaining (non implicitly definable) variables.
The identification and the elimination of the implicitly defined variables via explicitly defining terms is an essential step towards the correctness of the combined cover algorithm: when computing a cover of a formula $\phi(\ux, \uy)$ (w.r.t.~$\uy$), the variables $\ux$ are (non-eliminable) parameters, and those variables among the $\uy$ that are implicitly definable \emph{need to be discovered and treated in the same way as the parameters $\ux$}. Only after this preliminary step (Lemma~\ref{lem:toterminal} below), the input cover algorithms can be suitably exploited (Proposition~\ref{prop:terminal} below). 

The combination result we obtain is quite strong, as it is a typical `black box' combination result: it applies not only to theories used in verification (like the combination of real arithmetics with $\EUF$), but also in other contexts. For instance, since the theory $\mathcal B$ of Boolean algebras satisfies our hypotheses (being model completable and strongly amalgamable~\cite{GG18}), we get that uniform interpolants exist in the combination of $\mathcal B$ with $\EUF$. The latter is the equational theory algebraizing 
the
basic non-normal classical modal logic
system $\bf E$ from~\cite{segerberg} 
(extended to $n$-ary modalities). 
Notice that this result must be contrasted with the case of many systems of Boolean algebras with operators where existence of uniform interpolation fails~\cite{MK} (recall that operators on a Boolean algebra are not just arbitrary functions, but are required to be monotonic and also to preserve either joins or meets in each coordinate).

As a last important comment on related work, it is worth mentioning that Gulwani and Musuvathi in~\cite{GM} also have a combined cover algorithm for convex, signature disjoint theories. Their algorithm looks quite different from ours; apart from the fact that a full correctness and completeness proof for such an algorithm has never been published, we underline that our algorithm is rooted on different hypotheses. In fact, we only need the equality interpolating condition and we show that this hypothesis is not only sufficient, but also necessary for cover transfer in convex theories; consequently, our result is 
formally stronger. The equality interpolating condition was known to the authors of~\cite{GM} 
(but not even mentioned in their paper~\cite{GM}): in fact, it was introduced by one of them some years before \cite{ym}. 
The  equality interpolating condition was then extended to the non convex case in~\cite{bgr-acmtocl}, where it was also semantically characterized via the strong amalgamation property.


The paper is organized as follows: after some preliminaries in Section~\ref{sec:prelim}, the crucial Covers-by-Extensions Lemma and the relationship between covers and model completions from~\cite{cade19} are recalled in Section~\ref{sec:covers}. In Section~\ref{sec:strongamalgamation}, we present some preliminary results on interpolation and Beth definability that are instrumental to our machinery. After some useful facts about convex theories in Section~\ref{sec:convex}, we introduce the combined cover algorithms for the convex case and we prove its correctness in Section~\ref{sec:combined}; we also present a detailed example of application of the combined algorithm in case of the combination of \EUF\ with linear real arithmetic, and we show that the equality interpolating condition is necessary (in some sense) for combining covers. In Section~\ref{sec:nonconvex} we exhibit a counteraxample to the existence of combined covers in the non-convex case. Finally, in Section~\ref{sec:value} we prove that for the `tame' multi-sorted theory combinations used in our database-driven applications, covers existence transfers to the combined theory under only the stable infiniteness requirement for the shared sorts. Section~\ref{sec:conclusions} is devoted to the conclusions and discussion of future work. The current paper is the extended version of \cite{IJCAR20}.

\section{Preliminaries}
\label{sec:prelim}

We adopt the usual first-order syntactic notions of signature, term,
atom, (ground) formula,  and so on; our signatures are always \emph{finite} or \emph{countable}
and include equality. To avoid considering limit cases, we assume that signatures always contain at least an individual constant.
%
We compactly represent a tuple $\tup{x_1,\ldots,x_n}$ of variables as $\ux$. The notation $t(\ux), \phi(\ux)$ means that the term $t$, the formula $\phi$ has free variables included in the tuple $\ux$. This tuple is assumed to be formed by \emph{distinct variables}, thus we underline that when we write e.g. $\phi(\ux, \uy)$, we mean that the tuples $\ux, \uy$ are made of distinct variables that are also disjoint from each other.
%

A formula is said to be \emph{universal} (resp., \emph{existential}) if it has the form $\forall \ux (\phi(\ux))$ (resp., $\exists \ux (\phi(\ux))$), where $\phi$ is  quantifier-free. Formulae with no free variables are called \emph{sentences}. 
On the semantic side, we use the standard notion of $\Sigma$-structure $\cM$ and of truth of a formula in a $\Sigma$-structure under a free variables assignment. 
The support of $\cM$ is denoted as 
$\vert \cM\vert$.
The interpretation of a 
(function, predicate) 
symbol $\sigma$ in $\cM$ is denoted $\sigma^{\cM}$.

A \emph{$\Sigma$-theory} $T$ is a set of $\Sigma$-sentences; a \emph{model}  of $T$ is a $\Sigma$-structure $\cM$ where all sentences in $T$ are true.
	 We use the standard notation $T\models \phi$ to say that $\phi$ is true in all models of $T$ for every assignment to the variables occurring free in $\phi$. We say that $\phi$ is \emph{$T$-satisfiable} iff there is a model $\cM$ of $T$ and an assignment to the variables occurring free in $\phi$ making $\phi$ true in $\cM$.
	 
	 We now focus on the constraint satisfiability problem and quantifier elimination for a theory $T$.
	 A $\Sigma$-formula $\phi$ is a $\Sigma$-\emph{constraint} (or just a constraint) iff it is a conjunction of literals.
	The \emph{constraint satisfiability problem} for $T$ is the following: we are given a constraint 
	$\phi(\ux)$
	and we are asked whether there exist  a model $\cM$ of $T$ and an assignment $\cI$ to the free variables $\ux$ such that $\cM, \cI \models \phi(\ux)$. 	
	 A theory $T$ has \emph{quantifier elimination} iff for every formula $\phi(\ux)$ in the signature of $T$ there is a quantifier-free formula 
	$\phi'(\ux)$ such that $T\models \phi(\ux)\leftrightarrow \phi'(\ux)$.
	Since we are in a
	computational logic context, when we speak of quantifier elimination, we assume that it is effective, namely that it comes
	with an algorithm for computing 
	$\phi'$ out of $\phi$.
	It is well-known that quantifier elimination holds in case we can eliminate quantifiers from \emph{primitive} formulae, i.e., formulae of the kind $\exists \uy \,\phi(\ux, \uy)$, with $\phi$ a constraint.
	%

 We recall also some further basic notions.
 %
 %
 Let $\Sigma$ be a first-order signature. The signature
 obtained from $\Sigma$ by adding to it a set $\ua$ of new constants
 (i.e., $0$-ary function symbols) is denoted by $\Sigma^{\ua}$.
Analogously, given a $\Sigma$-structure $\cM$, the signature $\Sigma$ can be expanded to a new signature $\Sigma^{|\cM|}:=\Sigma\cup \{\bar{a}\ |\ a\in |\cM| \}$ by adding a set of new constants $\bar{a}$ (the \textit{name} for $a$), one for each element $a$ in the support of $\cM$, with the convention that two distinct elements are denoted by different ``name'' constants. $\cM$ can be expanded to a $\Sigma^{|\cM|}$-structure $\overline{\cM}:=(\cM, a)_{a\in \vert\cM\vert}$ just interpreting the additional constants over the corresponding elements. From now on, when the meaning is clear from the context, we will freely use the notation  $\cM$ and  $\overline{\cM}$ interchangeably: in particular, given a $\Sigma$-structure
$\cM$ 
and a $\Sigma$-formula $\phi(\ux)$ with free variables that are all in $\ux$, we will write, by abuse of notation, 
$\cM\models \phi(\ua)$ instead of $\overline{\cM}\models \phi(\bar{\ua})$.

A {\it $\Sigma$-homomorphism} (or, simply, a
homomorphism) between two $\Sigma$-structu\-res $\cM$ and
$\cN$ is a map $\mu: \vert \cM \vert \lra \vert \cN\vert $ among the
support sets $\vert \cM \vert $ of $\cM$ and $\vert \cN \vert$  of $\cN$ satisfying the condition
$(\cM \models \varphi \quad \Rightarrow \quad \cN \models \varphi)$
for all $\Sigma^{\vert \cM\vert}$-atoms $\varphi$ ($\cM$ is regarded as a
$\Sigma^{\vert \cM\vert}$-structure, by interpreting each additional constant $a\in
\vert \cM\vert $ into itself and $\cN$ is regarded as a $\Sigma^{\vert \cM\vert}$-structure by
interpreting each additional constant $a\in \vert \cM\vert $ into $\mu(a)$). 
In case the last condition 
holds for all $\Sigma^{|\cM|}$-literals,
the homomorphism $\mu$ is said to be an {\it
	embedding} and if it holds for all 
first order
formulae, the embedding $\mu$ is said to be {\it
	elementary}. 
If  $\mu: \cM \lra \cN$ is an embedding which is just the
identity inclusion $\vert \cM\vert\subseteq\vert \cN\vert$, we say that $\cM$ is a {\it
	substructure} of $\cN$ or that $\cN$ is an {\it extension} of
$\cM$. Universal theories can be characterized as those theories $T$ having the property that if $\cM\models T$ and $\cN$ is a substructure of $\cM$, then $\cN\models T$ (see~\cite{CK}). If $\cM$ is a structure and $X\subseteq \vert \cM\vert$, then there is the smallest substructure of $\cM$ including $X$ in its support; this is called the substructure \emph{generated by $X$}. If $X$ is the set of elements of a finite tuple $\ua$, then the substructure generated by $X$ has in its support precisely the $b\in \vert\cM\vert$ such that 
$\cM\models b=t(\ua)$ for some term $t$.

Let $\cM$ be a $\Sigma$-structure. The \textit{diagram} of $\cM$, written $\Delta_{\Sigma}(\cM)$ (or just $\Delta(\cM)$), is the set of ground $\Sigma^{|\cM|}$-literals 
that are true in $\cM$. 
%
An easy but important result, called 
\emph{Robinson Diagram Lemma}~\cite{CK},
says that, given any $\Sigma$-structure $\cN$,  the embeddings $\mu: \cM \longrightarrow \cN$ are in bijective correspondence with
expansions of $\cN$ 
to   $\Sigma^{\vert \cM\vert}$-structures which are models of 
$\Delta_{\Sigma}(\cM)$. The expansions and the embeddings are related in the obvious way: $\bar a$ is interpreted as $\mu(a)$.

\section{Covers and Model Completions}\label{sec:covers}

We report the notion of \emph{cover} taken from~\cite{GM} and also the basic results proved in~\cite{cade19}. 
Fix a  theory $T$ and an existential formula $\exists \ue\, \phi(\ue, \uy)$; call a \emph{residue} of $\exists \ue\, \phi(\ue, \uy)$ any quantifier-free formula belonging to the set of quantifier-free formulae  $Res(\exists \ue\, \phi)=\{\theta(\uy, \uz)\mid T \models \phi(\ue, \uy) \to \theta(\uy, \uz)\}$. 
A quantifier-free formula $\psi(\uy)$ is said to be a \emph{$T$-cover} (or, simply, a \emph{cover}) of $\exists \ue\, \phi(\ue,\uy)$ iff  $\psi(\uy)\in Res(\exists \ue\, \phi)$ and $\psi(\uy)$ implies (modulo $T$) all the other formulae in $Res(\exists \ue\, \phi)$.
The following ``cover-by-extensions'' Lemma~\cite{cade19} (to be widely used throughout the paper) supplies a semantic counterpart to the notion of a cover: 

\begin{lemma}[Cover-by-Extensions]\label{lem:cover} A formula $\psi(\uy)$ is a $T$-cover of $\exists \ue\, \phi(\ue, \uy)$ iff 
it satisfies the following two conditions:
\begin{inparaenum}[(i)]
\item $T\models  \forall \uy\,( \exists \ue\,\phi(\ue, \uy) \to \psi(\uy))$;
\item for every model $\cM$ of $T$, for every tuple of  elements $\ua$ from the support of $\cM$ such that $\cM\models \psi(\ua)$ it is possible to find
  another model $\cN$ of $T$ such that $\cM$ embeds into $\cN$ and $\cN\models \exists \ue \,\phi(\ue, \ua)$.
\end{inparaenum}
\end{lemma}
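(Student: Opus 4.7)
The plan is to prove both directions by unpacking the definition of $Res(\exists \ue\, \phi)$, with all the real work concentrated in the necessity of condition (ii), which is established via a Diagram Lemma argument combined with compactness.

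For the $(\Leftarrow)$ direction (sufficiency), I would first observe that (i) is exactly the statement that $\psi(\uy)\in Res(\exists \ue\, \phi)$. Then, to show that $\psi$ implies modulo $T$ every other residue, I take an arbitrary $\theta(\uy,\uz)\in Res(\exists \ue\, \phi)$ and argue by contradiction: if $T\not\models \psi(\uy)\to \theta(\uy,\uz)$, pick a model $\cM\models T$ and tuples $\ua,\ub$ from $|\cM|$ with $\cM\models \psi(\ua)\wedge \neg\theta(\ua,\ub)$. By (ii), there is an extension $\cN\supseteq \cM$ with $\cN\models T$ and $\cN\models \exists \ue\, \phi(\ue,\ua)$. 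From $\theta\in Res(\exists \ue\, \phi)$ we obtain $\cN\models \theta(\ua,\ub)$; since $\theta$ is quantifier-free and embeddings preserve and reflect literals, also $\cM\models \theta(\ua,\ub)$, contradicting the choice of $\cM$.

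For the $(\Rightarrow)$ direction (necessity), condition (i) is immediate: $\psi(\uy)\in Res(\exists \ue\, \phi)$ means $T\models \phi(\ue,\uy)\to \psi(\uy)$ and, since $\ue$ does not occur free in $\psi$, this is equivalent to (i). The core work is (ii). Given $\cM\models T$ with $\cM\models \psi(\ua)$, I form the $\Sigma^{|\cM|}\cup\{\ue\}$-theory
\[
\Gamma \;:=\; T\;\cup\; \Delta_\Sigma(\cM)\;\cup\;\{\phi(\ue,\ua)\},
\]
where $\ue$ is treated as a tuple of fresh constants. A model of $\Gamma$, restricted to $\Sigma$, will by the Robinson Diagram Lemma yield the desired $\cN$ with $\cM\hookrightarrow \cN\models T$ and $\cN\models \exists \ue\,\phi(\ue,\ua)$.

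The only real obstacle is showing $\Gamma$ is consistent. Suppose, for contradiction, it is not; by compactness there exists a finite conjunction $\delta(\ua,\ub)$ of ground literals of $\Delta_\Sigma(\cM)$ (where $\ub$ are the names of the finitely many additional elements of $|\cM|$ mentioned) such that $T\cup\{\phi(\ue,\ua)\wedge \delta(\ua,\ub)\}$ is inconsistent. Replacing the $\Sigma^{|\cM|}$-constants $\ua,\ub$ by fresh variables $\uy,\uz$, this gives $T\models \phi(\ue,\uy)\to \neg\delta(\uy,\uz)$, so $\neg\delta(\uy,\uz)\in Res(\exists \ue\, \phi)$. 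Since $\psi$ is a cover, $T\models \psi(\uy)\to \neg\delta(\uy,\uz)$. But by construction $\cM\models \psi(\ua)\wedge \delta(\ua,\ub)$, a contradiction. Hence $\Gamma$ is consistent, and the Diagram Lemma concludes the argument.
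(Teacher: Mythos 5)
Your proposal is correct and follows essentially the same route as the paper's proof: sufficiency via passing to the extension $\cN$ supplied by (ii) and pulling the quantifier-free residue back along the embedding, and necessity via the Robinson Diagram Lemma plus compactness, turning the finitely many offending diagram literals into a residue that the cover must imply. The only cosmetic difference is that you phrase the sufficiency direction contrapositively, whereas the paper argues it directly.
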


We underline that, since our language is at most countable, we can assume that the models $\cM$, $\cN$ from (ii) above are at most countable too, by a L\"owenheim-Skolem argument.

We say that a theory $T$ has \emph{uniform quantifier-free interpolation} iff every existential formula $\exists \ue\, \phi(\ue,\uy)$ (equivalently, every primitive formula $\exists \ue\, \phi(\ue,\uy)$) has a $T$-cover.

It is clear that if $T$ has uniform quantifier-free interpolation, then it has ordinary \emph{quantifier-free interpolation}~\cite{bgr-acmtocl}, in the sense that if we have $T\models \phi(\ue, \uy)\to \phi'(\uy, \uz)$ (for quantifier-free formulae $\phi, \phi'$), then there is a quantifier-free formula $\theta(\uy)$ such that  $T\models \phi(\ue, \uy)\to \theta(\uy)$ and $T\models \theta(\uy)\to \phi'(\uy, \uz)$. In fact, if $T$ has uniform quantifier-free interpolation, then the interpolant $\theta$ is independent on $\phi'$ (the  same $\theta(\uy)$ can be used as interpolant for all entailments $T\models \phi(\ue, \uy)\to \phi'(\uy, \uz)$, varying $\phi'$).

 We say that a \emph{universal} theory $T$ has a \emph{model completion} iff there is a stronger theory $T^*\supseteq T$ (still within the same signature $\Sigma$ of $T$) such that (i) every $\Sigma$-constraint that is satisfiable
	in a model of $T$ is satisfiable in a model of $T^*$; (ii) $T^*$ eliminates quantifiers.
	Other equivalent definitions are possible~\cite{CK}: for instance, (i) is equivalent to the fact that $T$ and $T^*$ prove the same universal formulae or again to the fact that every model of $T$ can be embedded into a model of $T^*$.
	We recall that the model completion, if it exists, is unique and that its existence implies the quantifier-free interpolation property for $T$~\cite{CK}
	(the latter can be seen directly or via the correspondence between quantifier-free interpolation and amalgamability, see~\cite{bgr-acmtocl}).
	%
	
	A close relationship between model completion and uniform interpolation emerged in the area of propositional logic (see the book~\cite{GZ}) and can be formulated roughly as follows. It is well-known that most propositional calculi, via Lindembaum constructions, can be algebraized: the algebraic analogue of classical logic are Boolean algebras, the algebraic analogue of intuitionistic logic are Heyting algebras, the algebraic analogue of modal calculi are suitable variaties of modal agebras, etc. Under suitable hypotheses, it turns out that a propositional logic has uniform interpolation (for the global consequence relation) iff the equational theory axiomatizing the corresponding variety of algebras has a model completion~\cite{GZ}.
	In  the context of first order theories, we 
	prove an even more direct connection:

\begin{theorem}\label{prop:qe} Suppose that $T$ is a universal theory. Then $T$ has a model completion $T^*$ iff $T$ has uniform quantifier-free interpolation. If this happens, 
  $T^*$ is axiomatized by the infinitely many sentences
  $\forall \uy \,(\psi(\uy) \to \exists \ue\, \phi(\ue, \uy))$,
  where $\exists \ue\, \phi(\ue, \uy)$ is a  primitive formula and $\psi$ is a cover of it.
\end{theorem}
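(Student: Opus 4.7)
The plan is to prove the two implications separately, using the semantic Cover-by-Extensions Lemma as the main bridge between the syntactic and model-theoretic sides.

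\textbf{$(\Rightarrow)$ From model completion to uniform quantifier-free interpolation.} Suppose $T^*$ is the model completion of $T$. Given a primitive formula $\exists \ue\,\phi(\ue,\uy)$, use quantifier elimination in $T^*$ to produce a quantifier-free $\psi(\uy)$ with $T^* \models \exists \ue\,\phi(\ue,\uy) \leftrightarrow \psi(\uy)$. I would then verify the two clauses of Lemma~\ref{lem:cover} for $\psi$. For clause (i), if $\cM\models T$ and $\cM\models \exists\ue\,\phi(\ue,\ua)$, embed $\cM$ into some $\cN\models T^*$ (possible since $T$ is universal, so every model embeds into one of $T^*$); existential formulae are preserved upwards along embeddings, so $\cN\models \exists\ue\,\phi(\ue,\ua)$, hence $\cN\models\psi(\ua)$; quantifier-free formulae are then reflected down to the substructure $\cM$. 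For clause (ii), if $\cM\models T$ and $\cM\models\psi(\ua)$, embed $\cM$ into $\cN\models T^*$; then $\cN\models\psi(\ua)$ by the same preservation argument, so $\cN\models\exists\ue\,\phi(\ue,\ua)$, and $\cN\models T$ because $T^*\supseteq T$.

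\textbf{$(\Leftarrow)$ From uniform quantifier-free interpolation to model completion.} Define $T^*$ to be $T$ together with all axioms $\forall\uy\,(\psi(\uy)\to \exists\ue\,\phi(\ue,\uy))$ where $\psi$ is a $T$-cover of the primitive formula $\exists\ue\,\phi(\ue,\uy)$. I have to check the two defining properties of a model completion. Quantifier elimination is essentially free: for every primitive formula $\exists\ue\,\phi$ with cover $\psi$, the direction $T \models \exists\ue\,\phi\to \psi$ is clause (i) of Lemma~\ref{lem:cover}, and the direction $\psi\to\exists\ue\,\phi$ is an explicit axiom of $T^*$; hence $T^*\models \exists\ue\,\phi\leftrightarrow \psi$, and elimination of quantifiers from primitive formulae suffices to obtain full quantifier elimination.

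The remaining task, which is the technical heart of the argument, is to show that every $T$-satisfiable constraint $\alpha(\uy)$ is satisfied in some model of $T^*$. I would use a standard union-of-chains construction: starting from a countable model $\cM_0\models T$ satisfying $\alpha(\ub)$ for some assignment $\ub$, I enumerate all triples $(\exists\ue\,\phi(\ue,\uy),\psi(\uy),\uc)$ where $\psi$ is the chosen cover of $\exists\ue\,\phi$ and $\uc$ is a tuple of elements drawn from the support of the current model; at stage $n$ I process the $n$-th triple and, whenever $\cM_n\models \psi(\uc)$, invoke clause (ii) of Lemma~\ref{lem:cover} to extend $\cM_n$ to $\cM_{n+1}\models T$ satisfying $\exists\ue\,\phi(\ue,\uc)$. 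A standard bookkeeping argument guarantees that every such triple is eventually processed. Let $\cM_\omega$ be the union of the chain: it is a model of $T$ because $T$ is universal, it still satisfies the quantifier-free formula $\alpha(\ub)$, and it satisfies every $T^*$-axiom $\forall\uy\,(\psi(\uy)\to\exists\ue\,\phi(\ue,\uy))$, because a witness $\ua$ to $\psi$ in $\cM_\omega$ appears already in some $\cM_n$ (by quantifier-freeness), the triple is processed at some later stage, and the existential witness survives in the union.

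\textbf{Main obstacle.} The delicate point is the chain construction for the backward implication: I need to schedule the enumeration so that every tuple that ever falsifies an axiom of $T^*$ at some finite stage is eventually repaired, while ensuring that $\psi(\uc)$, once true at stage $n$, remains true at all later stages (this is automatic by quantifier-freeness and the fact that extensions preserve quantifier-free truth). The axiomatization claim in the statement then follows immediately from the above construction, since $T^*$ was defined to be $T$ together with exactly the listed sentences.
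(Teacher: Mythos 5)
Your proposal is correct and follows precisely the route the paper itself indicates for this theorem (the paper only sketches it, deferring the details to a cited reference): the Cover-by-Extensions Lemma~\ref{lem:cover} drives both implications, with the forward direction read off from quantifier elimination in $T^*$ and the backward direction completed by iterating a chain construction whose union realizes all the cover axioms. No gaps.
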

%
The proof (via Lemma~\ref{lem:cover}, by iterating  a chain construction) is in~\cite{MSCS20} (see also~\cite{new}). 

\section{Equality Interpolating Condition and Beth Definability}\label{sec:strongamalgamation} %

We report here some definitions and results 
we need 
concerning combined quantifier-free interpolation. 
Most definitions and result come from~\cite{bgr-acmtocl}, but are simplified here 
 because we restrict them to the case of universal convex theories. Further information on the semantic side is supplied in Appendix~\ref{sec:approofs}.

A theory $T$ is \emph{stably infinite} iff every $T$-satisfiable
constraint is satisfiable in
an infinite model of $T$.
The following Lemma comes from a compactness argument 
(see Appendix~\ref{sec:approofs} for a proof):
\begin{lemma}\label{lem:si}
 If $T$ is stably infinite, then every finite or countable model $\cM$ of $T$ can be embedded in a model $\cN$ of $T$ such
 that  $\vert \cN\vert \setminus \vert \cM\vert$ is countable.
\end{lemma}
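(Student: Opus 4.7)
The plan is a standard compactness argument that leverages stable infiniteness to add countably many new elements while preserving the diagram of $\cM$. Let $\cM$ be a (finite or countable) model of $T$, and introduce countably many fresh constants $\{c_i : i \in \mathbb{N}\}$ disjoint from $|\cM|$. Work in the signature $\Sigma^{|\cM|} \cup \{c_i : i \in \mathbb{N}\}$ and consider
\[
\Gamma \;=\; T \;\cup\; \Delta_\Sigma(\cM) \;\cup\; \{\,c_i \neq c_j : i < j\,\} \;\cup\; \{\,c_i \neq \bar a : a \in |\cM|,\ i \in \mathbb{N}\,\}.
\]
If $\Gamma$ is consistent, any model of $\Gamma$, viewed as a $\Sigma$-structure, is a model of $T$; by the Robinson Diagram Lemma it receives an embedding of $\cM$ (the $\bar a$'s providing the map), and the $c_i$'s realise infinitely many new elements outside the image of $|\cM|$. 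Applying L\"owenheim--Skolem (the overall signature is still countable) we obtain a countable such model $\cN$, so $|\cN| \setminus |\cM|$ is countable, as desired.

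So the core task is to establish finite satisfiability of $\Gamma$. Take any finite subset $\Gamma_0 \subseteq \Gamma$. It mentions only finitely many named constants $\bar a_1, \dots, \bar a_n$ and finitely many new constants $c_{i_1}, \dots, c_{i_k}$, and contains finitely many literals $L_1(\bar a_1, \dots, \bar a_n), \dots, L_m(\bar a_1, \dots, \bar a_n)$ from $\Delta_\Sigma(\cM)$ together with the relevant inequalities among the $c_{i_p}$'s and between the $c_{i_p}$'s and the $\bar a_j$'s. Replacing the constants by fresh variables $\ux = (x_1, \dots, x_n)$, $\uy = (y_1, \dots, y_k)$, the finite subset is captured by the constraint
\[
\psi(\ux, \uy) \;:=\; \bigwedge_{r=1}^m L_r(\ux) \;\wedge\; \bigwedge_{p \neq q} y_p \neq y_q \;\wedge\; \bigwedge_{p,\,j} y_p \neq x_j.
\]

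The key step is to show $\psi$ is $T$-satisfiable. The conjunction $\bigwedge_r L_r(\ux)$ is true in $\cM$ under $x_j \mapsto a_j$, hence is a $T$-satisfiable constraint. By stable infiniteness of $T$, this constraint is satisfied in some \emph{infinite} model $\cM^\star$ of $T$ under an assignment $x_j \mapsto a_j^\star$. Since $|\cM^\star|$ is infinite, we can pick $k$ distinct elements $b_1, \dots, b_k \in |\cM^\star| \setminus \{a_1^\star, \dots, a_n^\star\}$ and extend the assignment by $y_p \mapsto b_p$; this witnesses $\cM^\star \models \psi$, so $\psi$ is $T$-satisfiable. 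Hence $\Gamma_0$ has a model, and by compactness $\Gamma$ does too, completing the argument.

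The only delicate point—the step one must not skip—is that stable infiniteness, as stated, is a property of constraints (conjunctions of literals) rather than of full theories: it is crucial that the finite part of the diagram appearing in $\Gamma_0$ is itself a conjunction of literals, so that it qualifies as a constraint and stable infiniteness is applicable to produce the infinite model from which the fresh witnesses for the $c_i$'s can be drawn.
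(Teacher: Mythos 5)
Your proof is correct and follows essentially the same route as the paper: a compactness argument on $T\cup\Delta(\cM)$ augmented with disequalities for countably many fresh constants, where stable infiniteness supplies an infinite model of each finite fragment of the diagram in which the fresh constants can be interpreted by distinct new elements, and L\"owenheim--Skolem then yields a countable $\cN$. The point you flag as delicate (that the finite diagram fragment is a constraint, so stable infiniteness applies) is exactly the step the paper relies on as well.
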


A theory $T$ is \emph{convex} iff 
for every constraint $\delta$, if $T\vdash\delta \to
\bigvee_{i=1}^n x_i=y_i$ then $T\vdash\delta\to x_i=y_i$ holds for some
$i\in \{1, ..., n\}$.
A convex theory $T$ is `almost' stably infinite in the sense that it can be shown that every constraint which is $T$-satisfiable in a $T$-model whose support has at least two elements is satisfiable also in an infinite $T$-model. The one-element model can be used to build counterexamples, though: e.g.,
the theory of Boolean algebras is convex (like any other universal Horn theory) but the constraint $x=0 \wedge x=1$ is only satisfiable in the degenerate one-element Boolean algebra. Since we take into account these limit cases, we do not  assume that convexity implies stable infiniteness. 

\begin{definition}
  \label{def:YM}
   A convex universal theory $T$ is \emph{equality interpolating}  iff 
     for every pair $y_1, y_2$ of variables and for every pair
       of \emph{constraints} $\delta_1(\ux,\uz_1, y_1),
       \delta_2(\ux,\uz_2,y_2)$ such that
       \begin{equation}
         \label{eq:ym_ant}
         T\vdash\delta_1(\ux, \uz_1,y_1)\wedge
         \delta_2(\ux,\uz_2, y_2)\to
         y_1= y_2
       \end{equation}
       there exists a term $t(\ux)$ such that
       \begin{equation}
         \label{eq:ym_cons}
         T\vdash
         \delta_1(\ux, \uz_1, y_1)\wedge
         \delta_2(\ux, \uz_2, y_2)\to
         y_1= t(\underline{x})  \wedge  y_2= t(\underline{x}).
       \end{equation}
\end{definition}

\begin{theorem}\cite{ym,bgr-acmtocl}
 \label{thm:sub-amalgamation_transfer}
 Let $T_1$ and $T_2$ be two universal, convex, stably infinite theories over disjoint
 signatures $\Sigma_1$ and $\Sigma_2$.  If both $T_1$ and $T_2$ are equality interpolating and have quantifier-free interpolation property,  then so does $T_1\cup T_2$. 
\end{theorem}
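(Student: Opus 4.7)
The plan is to prove the theorem by passing through the semantic characterization of the hypothesis. By the results recalled from~\cite{bgr-acmtocl} (see Appendix~\ref{sec:approofs}), a universal convex theory has quantifier-free interpolation together with the equality interpolating property exactly when it is \emph{strongly amalgamable}: any two models $\cM_1, \cM_2$ of $T$ sharing a common substructure $\cA$ admit a joint $T$-extension $\cN$ in which the images of $\cM_1$ and $\cM_2$ meet precisely in the image of $\cA$. It therefore suffices to prove that strong amalgamability transfers to the disjoint-signature combination $T_1 \cup T_2$, provided both components are stably infinite. Once this semantic transfer is in place, the original syntactic conclusion (equality interpolating plus quantifier-free interpolation for $T_1 \cup T_2$) follows by re-applying the same characterization, noting that $T_1 \cup T_2$ is still universal and convex.

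For the semantic step, given $(T_1 \cup T_2)$-models $\cM_1, \cM_2$ with common substructure $\cA$, I would construct the desired strong amalgam $\cN$ as the union of a countable chain $\cN_0 \subseteq \cN_1 \subseteq \cdots$ of $\Sigma_1 \cup \Sigma_2$-structures whose $\Sigma_i$-reducts are $T_i$-models. The construction alternates across signatures: at stage $n$ even, apply $T_1$-strong amalgamation to the current $\Sigma_1$-reducts over their current common $\Sigma_1$-substructure, obtaining a $T_1$-model into which both previous reducts embed and whose intersection lies in the image of $\cA$; this introduces new elements that carry no $\Sigma_2$-interpretation, so I would invoke Lemma~\ref{lem:si} (using stable infiniteness of $T_2$) to enlarge the $\Sigma_2$-reduct to a countable $T_2$-model that absorbs the new elements. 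Stage $n$ odd is handled symmetrically. Because $T_1$ and $T_2$ are universal, their model classes are closed under unions of chains, so $\cN := \bigcup_n \cN_n$ satisfies $T_1 \cup T_2$.

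The main obstacle is certifying the strong-amalgamation intersection condition at the limit: if some $b \in |\cN|$ happens to lie simultaneously in the images of $\cM_1$ and $\cM_2$, one must show $b$ already lies in the image of $\cA$. This is handled by a bookkeeping argument, since at every finite stage the intersection of the images is kept equal to the image of $\cA$ by the $T_i$-strong amalgamation used at that stage, and fresh elements introduced later cannot retroactively enter earlier reducts. Stable infiniteness is indispensable precisely here: the $\Sigma_{3-i}$-enlargement that must absorb elements freshly produced by a $\Sigma_i$-strong amalgamation should never collapse any equalities or identify elements across the two sides, and this is exactly what the freedom to embed into an arbitrarily large model guaranteed by Lemma~\ref{lem:si} ensures. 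Convexity enters the picture indirectly, since both the equivalence ``equality interpolating plus QFI $\Leftrightarrow$ strong amalgamation'' and the closure of convexity under disjoint unions rely on it.
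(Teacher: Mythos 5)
The paper does not prove this theorem: it is imported verbatim from~\cite{ym,bgr-acmtocl}, so there is no in-paper proof to compare against. Your route --- translate the hypotheses into strong amalgamability via Theorem~\ref{thm:strong_amalgamation_syntactic}, prove that strong amalgamability is modular for disjoint, stably infinite combinations by an alternating chain of strong amalgamations and absorption steps, then translate back --- is exactly the strategy of the cited source, and it is consistent with the semantic toolkit the paper itself deploys elsewhere (the chain-plus-absorption technique is precisely what Lemmas~\ref{lem:si}, \ref{lemma6} and~\ref{lem:stable} encapsulate, and the intersection bookkeeping you describe is sound: the set-level identification of $\vert\cM_1\vert$ and $\vert\cM_2\vert$ is fixed once and for all by the first strong amalgamation, and all later maps are injective embeddings, so no new identifications can appear at the limit).

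Two steps deserve more than the assertion you give them. First, to re-apply the characterization to $T_1\cup T_2$ you need $T_1\cup T_2$ to be convex: the direction ``strongly amalgamable $\Rightarrow$ equality interpolating in the sense of Definition~\ref{def:YM}'' is only available for convex theories (in the non-convex case one recovers only the generalized equality interpolating condition of~\cite{bgr-acmtocl}). Convexity of the disjoint combination of two convex, stably infinite theories is true but is itself a Nelson--Oppen-style argument, not a triviality, and your phrasing (``the closure of convexity under disjoint unions relies on it'') comes close to circularity; you should either prove it or cite it explicitly. Second, the absorption step needs cardinality bookkeeping: Lemma~\ref{lem:si} only supplies a countable set of fresh elements, so you must first cut everything down to countable structures by L\"owenheim--Skolem (as the paper does in the proof of Proposition~\ref{prop:terminal}) before the fresh elements of one side can be injected into an extension of the other. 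Neither point is a wrong turn, but both are genuine proof obligations rather than remarks.
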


There is a converse of the previous result; for a signature $\Sigma$, let us call $\EUF(\Sigma)$ the pure equality theory over the signature $\Sigma$ (this theory is equality interpolating and has the quantifier-free interpolation property).

\begin{theorem}\cite{bgr-acmtocl}
  \label{prop:strong_amalgamation_needed} 
  Let $T$ be a stably infinite, universal, convex theory 
  admitting quantifier-free interpolation 
  and let
  $\Sigma$ be a signature disjoint from the signature of $T$
  containing at least a unary predicate symbol.  Then, $T\cup
  \EUF(\Sigma)$ has quantifier-free interpolation iff $T$ is equality interpolating.
\end{theorem}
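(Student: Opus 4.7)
The statement is an iff. The ``if'' direction follows by direct application of Theorem~\ref{thm:sub-amalgamation_transfer} to the pair $T$ and $\EUF(\Sigma)$: the latter is universal, convex, stably infinite, has quantifier-free interpolation, and is equality interpolating (a standard fact checked via congruence closure). The non-trivial content is the converse, which I would prove by contrapositive: assuming $T$ is not equality interpolating, I construct a witness to the failure of quantifier-free interpolation in $T\cup\EUF(\Sigma)$.

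By failure of equality interpolation there exist constraints $\delta_1(\ux,\uz_1,y_1), \delta_2(\ux,\uz_2,y_2)$ over the signature of $T$ with $T\vdash\delta_1\wedge\delta_2\to y_1=y_2$ such that no term $t(\ux)$ satisfies the conclusion of Definition~\ref{def:YM}. Because $y_1=y_2$ is already entailed, this reduces to: for every term $t(\ux)$, $T\not\vdash\delta_1\wedge\delta_2\to y_1=t(\ux)$. Convexity of $T$ then rules out even finite disjunctions $\bigvee_i y_1=t_i(\ux)$, so a standard compactness argument yields a model $\cM$ of $T$ with an assignment $\cI$ satisfying $\delta_1\wedge\delta_2$ such that the value $b:=\cI(y_1)=\cI(y_2)$ lies outside the substructure $\cM_0$ of $\cM$ generated by $\cI(\ux)$ in the signature of $T$.

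The unary predicate $P\in\Sigma$ then lets me package this into a combined interpolation failure. Setting $\phi_1 := \delta_1\wedge P(y_1)$ and $\phi_2 := \delta_2\wedge\neg P(y_2)$, the entailment $y_1=y_2$ gives $(T\cup\EUF(\Sigma))\vdash\phi_1\wedge\phi_2\to\bot$, and $\ux$ is the only tuple of shared variables. A hypothetical quantifier-free interpolant $\theta(\ux)$ would satisfy $(T\cup\EUF(\Sigma))\vdash\phi_1\to\theta(\ux)$ and $(T\cup\EUF(\Sigma))\vdash\theta(\ux)\wedge\phi_2\to\bot$. I refute this by building two expansions $\cN_1,\cN_2$ of $\cM$ to the combined signature that agree on every interpretation over $\cM_0$ but differ at $b$: $b\in P^{\cN_1}$ and $b\notin P^{\cN_2}$. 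Both are models of $T\cup\EUF(\Sigma)$ since the signatures are disjoint and $\EUF(\Sigma)$ imposes no axioms. The interpolant conditions then force $\cN_1,\cI\models\theta(\ux)$ and $\cN_2,\cI\not\models\theta(\ux)$; yet every term of $\theta$ is built from $\ux$ alone and, provided the expansions are chosen carefully, evaluates inside $\cM_0$, on which $\cN_1$ and $\cN_2$ are identical -- so $\theta(\ux^{\cI})$ must receive the same truth value in both, a contradiction.

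The main delicate point will be to ensure that the substructure generated by $\cI(\ux)$ in $\cN_i$ over the \emph{combined} signature does not accidentally absorb $b$: any $\Sigma$-function symbols should be interpreted so as to keep $\cM_0$ closed under combined terms (for instance as projections onto their first argument, with any $\Sigma$-constants interpreted inside $\cM_0$), and any other $\Sigma$-predicates should be interpreted identically in $\cN_1$ and $\cN_2$. The conceptual heart of the argument is that the fresh predicate $P$ supplies exactly the separating power needed to distinguish an element $b$ that $T$ refuses to name by a term in $\ux$, thereby promoting a failure of equality interpolation in $T$ into a genuine failure of quantifier-free interpolation in $T\cup\EUF(\Sigma)$.
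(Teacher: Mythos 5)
The paper itself does not prove this theorem: it is imported wholesale from~\cite{bgr-acmtocl}, so there is no in-paper argument to compare against line by line. Your proposal is, as far as I can check, correct, and it takes a more direct route than the one in the cited source. The standard proof factors through semantics: quantifier-free interpolation of $T\cup\EUF(\Sigma)$ gives amalgamability of the combination, and the unary predicate is used to show that this forces \emph{strong} amalgamability of $T$, which by the analogue of Theorem~\ref{thm:strong_amalgamation_syntactic} is equivalent to equality interpolation plus quantifier-free interpolation. Your contrapositive argument short-circuits the amalgamation detour: the compactness-plus-convexity step producing an element $b=\cI(y_1)$ outside the $\Sigma_T$-substructure generated by $\cI(\ux)$ is sound (note only that the empty-disjunction case of convexity must be excluded by observing that a $T$-inconsistent $\delta_1\wedge\delta_2$ satisfies~\eqref{eq:ym_cons} vacuously with any ground term, which exists since signatures contain a constant), and the two expansions $\cN_1,\cN_2$ differing only in whether $b\in P$ correctly refute any interpolant $\theta(\ux)$, \emph{provided} you make the adjustment you yourself flag: $\Sigma$-function symbols must be interpreted (identically in both expansions) so that $\cM_0$ stays closed under combined terms, e.g.\ as projections, with $\Sigma$-constants sent into $\cM_0$. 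The two approaches buy different things: the amalgamation route yields the sharper structural statement (strong amalgamation of $T$) as a reusable intermediate, while yours is self-contained, purely finitary on the syntactic side, and makes transparent exactly where the unary predicate is needed --- to separate an element that $T$ cannot name by a term over the shared parameters. Both use convexity and the standing assumption that signatures contain a constant in the same places; stable infiniteness is needed only for your (correct) ``if'' direction via Theorem~\ref{thm:sub-amalgamation_transfer}.
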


In~\cite{bgr-acmtocl} the above definitions and results are extended to the non-convex case and a long list of universal  quantifier-free interpolating and equality interpolating theories is given. The list includes $\EUF(\Sigma)$, recursive data theories, as well as linear  arithmetics. For linear arithmetics (and fragments of its), it is essential to make a very careful choice of the signature, see again~\cite{bgr-acmtocl} (especially Subsection 4.1) for details.  All the above theories admit a model completion (which coincides with the theory itself in case the theory admits quantifier elimination).

The equality interpolating property in a  theory $T$ can be equivalently characterized using Beth definability as follows. Consider a primitive formula $\exists \uz \phi(\ux, \uz,y)$ (here $\phi$ is a conjunction of literals); we say that
$\exists \uz\, \phi(\ux, \uz,y)$ \emph{implicitly defines $y$} in $T$ iff the formula
\begin{equation}\label{eq:bethimpl}
 \forall y \,\forall y'\;( \exists \uz \phi(\ux, \uz,y)\wedge \exists \uz \phi(\ux, \uz,y') \to y= y')
\end{equation}
is $T$-valid.
We say that $\exists \uz \phi(\ux, \uz,y)$ \emph{explicitly defines $y$} in $T$ iff there is a term $t(\ux)$ such that the formula
\begin{equation}\label{eq:bethexpl}
 \forall y \;( \exists \uz \phi(\ux, \uz,y) \to y= t(\ux))
\end{equation}
 is $T$-valid.

 For future use, we notice that,
  by trivial logical manipulations, the formulae~\eqref{eq:bethimpl} and~\eqref{eq:bethexpl} are logically equivalent to
\begin{equation}\label{eq:bethimpl1}
  \forall y\forall \uz \forall y'\forall \uz'( \phi(\ux, \uz,y)\wedge \phi(\ux, \uz',y') \to y= y')~~~.
\end{equation}
and to
\begin{equation}\label{eq:bethexpl1}
 \forall y \forall \uz (\phi(\ux, \uz,y) \to y= t(\ux))
\end{equation}
respectively (we shall use such equivalences without explicit mention).

We say that a theory $T$ has the \emph{Beth definability property for primitive formulae} iff whenever a primitive formula
$\exists \uz\, \phi(\ux, \uz,y)$ implicitly defines  the variable $y$ then it also explicitly defines it.

\begin{theorem}\cite{bgr-acmtocl}
  \label{thm:beth}
  A convex theory $T$ having quan\-ti\-fier-free interpolation
   is equality interpolating iff it has the Beth definability property for primitive formulae.
\end{theorem}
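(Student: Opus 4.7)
My plan is to prove each direction by directly constructing the primitive formula needed on one side from the constraint(s) appearing on the other side, exploiting the equivalent reformulations~\eqref{eq:bethimpl1} and~\eqref{eq:bethexpl1} of implicit and explicit definability.

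For the forward direction (equality interpolating $\Rightarrow$ Beth), suppose the primitive formula $\exists\uz\,\phi(\ux,\uz,y)$ implicitly defines $y$, i.e.\ by~\eqref{eq:bethimpl1} one has $T\vdash\phi(\ux,\uz,y)\wedge\phi(\ux,\uz',y')\to y=y'$ for fresh disjoint renamings. I would then apply Definition~\ref{def:YM} with $\delta_1(\ux,\uz,y):=\phi(\ux,\uz,y)$ and $\delta_2(\ux,\uz',y'):=\phi(\ux,\uz',y')$: the common variables are exactly $\ux$, and the two `special' slots are occupied by $y$ and $y'$. Equality interpolating then yields a term $t(\ux)$ with $T\vdash\phi(\ux,\uz,y)\wedge\phi(\ux,\uz',y')\to y=t(\ux)$. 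Specialising the primed witnesses to the unprimed ones collapses the antecedent to $\phi(\ux,\uz,y)$ and delivers the explicit definability $T\vdash\phi(\ux,\uz,y)\to y=t(\ux)$, which is~\eqref{eq:bethexpl1}.

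For the converse (Beth $\Rightarrow$ equality interpolating), the idea is to package the two constraints of Definition~\ref{def:YM} into a single primitive formula. Given $\delta_1(\ux,\uz_1,y_1)$ and $\delta_2(\ux,\uz_2,y_2)$ with $T\vdash\delta_1\wedge\delta_2\to y_1=y_2$, I would set $\psi(\ux,\uu,y_1):=\delta_1(\ux,\uz_1,y_1)\wedge\delta_2(\ux,\uz_2,y_2)$ with $\uu$ the concatenation of $\uz_1,\uz_2$ and $y_2$. The key verification is that $\exists\uu\,\psi$ implicitly defines $y_1$: from two disjoint renamings $\psi(\ux,\uu,y_1)$ and $\psi(\ux,\uu',y_1')$ I extract the pair $\delta_1(\ux,\uz_1,y_1)\wedge\delta_2(\ux,\uz_2',y_2')$ (giving $y_1=y_2'$ by hypothesis) and the pair $\delta_1(\ux,\uz_1',y_1')\wedge\delta_2(\ux,\uz_2',y_2')$ (giving $y_1'=y_2'$), whence $y_1=y_1'$. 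Beth definability then supplies a term $t(\ux)$ with $T\vdash\psi\to y_1=t(\ux)$, and combining with $T\vdash\psi\to y_1=y_2$ yields $T\vdash\psi\to y_2=t(\ux)$, which is exactly~\eqref{eq:ym_cons}.

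The only real subtlety is a disciplined variable bookkeeping: the $\ux$ must be kept fixed as parameters while fresh copies of every other variable are introduced, so that both the implicit definability premise and the equality interpolating hypothesis can be instantiated in the intended way. Convexity is already built into the statement of Definition~\ref{def:YM}, while quantifier-free interpolation does not actively intervene in this direct argument; both appear in the hypotheses as the ambient conditions that make the equivalence meaningful in the broader framework of the paper.
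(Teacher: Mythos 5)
Your forward direction (equality interpolating $\Rightarrow$ Beth) is exactly the paper's argument: instantiate Definition~\ref{def:YM} with $\delta_1:=\phi(\ux,\uz,y)$ and $\delta_2:=\phi(\ux,\uz',y')$, obtain the term $t(\ux)$, and collapse the primed copy onto the unprimed one to get~\eqref{eq:bethexpl1}. The difference is in scope: the paper proves \emph{only} this direction, explicitly deferring the converse to~\cite{bgr-acmtocl} since it is not needed later, whereas you supply a self-contained proof of the converse as well. Your converse is correct: packaging $\delta_1(\ux,\uz_1,y_1)\wedge\delta_2(\ux,\uz_2,y_2)$ into a single constraint $\psi(\ux,\uu,y_1)$ with $\uu=\uz_1,\uz_2,y_2$, verifying implicit definability of $y_1$ by chaining $y_1=y_2'$ and $y_1'=y_2'$ through renamed instances of~\eqref{eq:ym_ant} (both extracted from the conjunction $\psi(\ux,\uu,y_1)\wedge\psi(\ux,\uu',y_1')$), and then reading off~\eqref{eq:ym_cons} from the explicit definition plus $y_1=y_2$ is precisely the kind of argument the cited reference uses. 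Your closing observation is also accurate: neither quantifier-free interpolation nor convexity does any work in either implication beyond making Definition~\ref{def:YM} the applicable notion of equality interpolation (the definition is only formulated for convex theories here, and the hypotheses situate the theorem within the paper's chain of equivalences with strong amalgamation). So the proposal is correct and, on the direction the paper actually proves, identical to it; the added converse is a genuine and sound supplement.
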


\begin{proof}
 We recall the easy proof of the left-to-right side (this is the only side we need in this paper). Suppose that $T$ is equality interpolating and that  
 \begin{equation*}
 T\vdash \phi(\ux, \uz,y)\wedge \phi(\ux, \uz',y') \to y= y'~~;
\end{equation*}
then there is a term $t(\ux)$ such that
\begin{equation*}
 T\vdash \phi(\ux, \uz,y)\wedge \phi(\ux, \uz',y') \to y=t(\ux) \wedge  y'=t(\ux)~~.
\end{equation*}
Replacing $\uz',y'$ by $\uz,y$ via a substitution, we get precisely~\eqref{eq:bethexpl1}. 
\end{proof}

\section{Convex Theories}\label{sec:convex}

We now collect some useful facts concerning convex theories. 
We fix for this section a \emph{convex, stably infinite, equality interpolating universal theory $T$ admitting a model completion $T^*$}.
We let $\Sigma$ be the signature of $T$. We fix also 
\emph{a  $\Sigma$-constraint $\phi(\ux, \uy)$}, 
where we assume that $\uy=y_1, \dots, y_n$ (recall that the tuple $\ux$ is disjoint from the tuple $\uy$ according to our conventions from Section~\ref{sec:prelim}). 

For $i=1, \dots, n$, we let the formula $\mathtt{ImplDef}_{\phi,y_i}^T(\ux)$ be the quantifier-free formula equivalent in $T^*$ to the formula
\begin{equation}\label{eq:def}
 \forall \uy\, \forall \uy' (\phi(\ux, \uy) \wedge \phi(\ux, \uy')\to y_i= y'_i)
\end{equation}
where the $\uy'$ are renamed copies of the $\uy$. Notice that the variables occurring free in $\phi$ are $\ux, \uy$, whereas only the $\ux$ occur free in $\mathtt{ImplDef}_{\phi,y_i}^T(\ux)$ (the variable $y_i$ is among the $\uy$ and does not occur free in $\mathtt{ImplDef}_{\phi,y_i}^T(\ux)$): these facts coming from our notational conventions are crucial and should be  
kept in mind when reading this and next section. 
%
%
The following semantic technical lemma is proved in Appendix~\ref{sec:approofs}:

\begin{lemma}\label{lem1}
 Suppose that we are given a  model $\cM$ of $T$ and elements $\ua$ from the support of $\cM$ such that  $\cM\not\models \mathtt{ImplDef}_{\phi,y_i}^T(\ua)$ for all $i=1, \dots,n$. Then there exists an extension $\cN$ of $\cM$ such that 
 for some $\ub\in \vert\cN\vert \setminus \vert \cM\vert$ we have $\cN\models \phi(\ua, \ub)$. 
\end{lemma}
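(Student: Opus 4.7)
The plan is to realize the inequalities $y_i\ne m$ for $m\in|\cM|$ jointly with $\phi(\ua,\uy)$ inside a $T$-extension of $\cM$. More precisely, by the Robinson Diagram Lemma it suffices to show that the $\Sigma^{|\cM|}$-theory
\[ p\;:=\;T\cup\Delta(\cM)\cup\{\phi(\ua,\uy)\}\cup\{y_i\ne m \;:\; m\in|\cM|,\,i=1,\dots,n\} \]
is consistent (here $\uy$ is treated as a tuple of fresh constants). Any model of $p$ is, via the diagram lemma, a $T$-structure $\cN$ in which $\cM$ embeds as a substructure and which furnishes a witness $\ub$ with each $b_i\in|\cN|\setminus|\cM|$ satisfying $\phi(\ua,\ub)$.

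Suppose for contradiction that $p$ were inconsistent. Compactness then yields a finite conjunction $D(\ua,\ue)$ of literals from $\Delta(\cM)$ together with a finite set $F$ of pairs $(i,m)$ such that
\[ T\;\vdash\;\phi(\ua,\uy)\wedge D(\ua,\ue) \;\to\; \bigvee_{(i,m)\in F} y_i = c_m. \]
Without loss of generality every constant $c_m$ appearing on the right is already among the entries of $\ua$ or $\ue$ (otherwise I simply enlarge $\ue$ by throwing in these constants together with trivial diagram literals $c_m=c_m$). Since $\ua,\uy,\ue$ are constants alien to $\Sigma$, universal generalization turns this into a genuine $T$-theorem of the form $\forall\ux\uy\uv\,(\phi(\ux,\uy)\wedge D(\ux,\uv)\to \bigvee_k y_{i_k}=w_k)$, in which each $w_k$ is a single variable drawn from $\ux\cup\uv$. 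Every disjunct is thus an equality between two variables, so convexity of $T$ collapses the disjunction, yielding an index $k_0$ with
\[ T\;\vdash\;\phi(\ux,\uy)\wedge D(\ux,\uv)\to y_{i_{k_0}}=w_{k_0}. \]

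Finally, I embed $\cM$ into a model $\cM^*\models T^*$. Since $D\subseteq\Delta(\cM)$ and embeddings preserve literals, $D(\ua,\ue)$ still holds in $\cM^*$; applying the last displayed $T$-theorem to any solution of $\phi(\ua,\uy)$ in $\cM^*$, one sees that $y_{i_{k_0}}$ must equal the fixed $|\cM|$-element $w_{k_0}^{\cM^*}$ (recall $w_{k_0}$ is a single variable from $\ux\cup\uv$). Hence $\cM^*\models \forall\uy\,\forall\uy'\,(\phi(\ua,\uy)\wedge\phi(\ua,\uy')\to y_{i_{k_0}}=y'_{i_{k_0}})$, which by the very definition of $\mathtt{ImplDef}$ in $T^*$ means $\cM^*\models \mathtt{ImplDef}^T_{\phi,y_{i_{k_0}}}(\ua)$. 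Because this formula is quantifier-free and $\cM$ embeds in $\cM^*$, it descends to $\cM$, contradicting the hypothesis at index $i_{k_0}$. I expect the main obstacle to be the bookkeeping in the universal-generalization step: one must carefully arrange that every $c_m$ appearing on the right-hand side of the compactness-entailment becomes a single variable after generalization, so that convexity -- which speaks only of disjunctions of equalities between variables -- actually applies.
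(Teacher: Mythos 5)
Your proof is correct, and its overall skeleton is the one the paper uses: assume the diagram-based extension problem is inconsistent, apply compactness and convexity to force some $y_i$ to equal a fixed element, deduce that the universal implicit-definability formula holds in a $T^*$-model extending $\cM$, and then pull the quantifier-free $\mathtt{ImplDef}_{\phi,y_i}^T(\ua)$ back down to $\cM$ for a contradiction. There is, however, one genuine difference worth noting. The paper first replaces $\cM$ by the substructure generated by $\ua$ (so that ``$b_i\notin\vert\cM\vert$'' can be written as ``$b_i\neq t(\ua)$ for all terms $t$'') and then needs the \emph{strong amalgamation} property --- i.e.\ the equality-interpolating hypothesis via Theorem~\ref{thm:strong_amalgamation_syntactic} --- to transport the resulting extension of the generated substructure back to an extension of the original $\cM$ in which the witnesses stay outside $\vert\cM\vert$. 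You avoid this reduction entirely by adjoining the inequalities $y_i\neq m$ for \emph{every} $m\in\vert\cM\vert$ and letting compactness select finitely many auxiliary diagram elements $\ue$, which you then carry through the convexity step as extra universally quantified variables $\uv$ in the constraint $\phi(\ux,\uy)\wedge D(\ux,\uv)$. This has two small payoffs: the lemma is proved without invoking strong amalgamation (only convexity, universality, and the existence of $T^*$ are used), and convexity is applied literally to a disjunction of equalities between \emph{variables}, so the paper's footnote about extending convexity to equalities between terms is not needed. The only point to be careful about --- and you flag it yourself --- is the bookkeeping in the generalization-on-constants step; your handling of it is fine, as is the degenerate case of an empty disjunction, which just makes the implicit-definability formula hold vacuously and yields the same contradiction.
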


The following Lemma supplies terms which will be used as ingredients in our combined covers algorithm:

\begin{lemma}\label{lem2}
 Let $L_{i1}(\ux)\vee \cdots \vee L_{ik_i}(\ux)$ be the disjunctive normal form (DNF) of $\mathtt{ImplDef}_{\phi,y_i}^T(\ux)$. Then, for every $j=1, \dots, k_i$, there is a $\Sigma(\ux)$-term $t_{ij}(\ux)$
 such that
 \begin{equation}\label{eq:term}
  T\vdash L_{ij}(\ux)\wedge \phi(\ux,\uy)\to y_i=t_{ij}~~.
 \end{equation}
 As a consequence, a  formula of the kind $\mathtt{ImplDef}_{\phi,y_i}^T(\ux) \wedge \exists\uy\, (\phi(\ux, \uy)\wedge \psi)$
 is equivalent (modulo $T$) to the formula
 \begin{equation}\label{eq:terms}
  \bigvee_{j=1}^{k_i} \exists \uy\; (y_i=t_{ij} \wedge L_{ij}(\ux)   \wedge \phi(\ux, \uy)\wedge \psi)~~.
 \end{equation}

\end{lemma}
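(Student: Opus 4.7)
The plan is to derive~\eqref{eq:term} by applying the Beth definability property (Theorem~\ref{thm:beth}) to a carefully chosen primitive formula, and then to obtain the stated equivalence by routine manipulations.

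Fix $i$ and $j$, and consider the primitive formula $\exists\uy_{-i}\,(L_{ij}(\ux)\wedge \phi(\ux,\uy))$, where $\uy_{-i}$ denotes $\uy$ with $y_i$ removed (note that $L_{ij}(\ux)\wedge \phi(\ux,\uy)$ is a conjunction of literals, as $L_{ij}$ is a DNF disjunct and $\phi$ is already a constraint). The first step is to show that this primitive formula implicitly defines $y_i$ in $T$, i.e., by~\eqref{eq:bethimpl1}, that $T$ proves $L_{ij}(\ux)\wedge\phi(\ux,\uy)\wedge L_{ij}(\ux)\wedge \phi(\ux,\uy') \to y_i = y_i'$. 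Since $L_{ij}$ is a disjunct of the DNF of $\mathtt{ImplDef}_{\phi,y_i}^T(\ux)$, we have propositionally $L_{ij}(\ux) \to \mathtt{ImplDef}_{\phi,y_i}^T(\ux)$, and by construction $\mathtt{ImplDef}_{\phi,y_i}^T(\ux)$ is $T^*$-equivalent to~\eqref{eq:def}. Hence the universal closure of $L_{ij}(\ux)\wedge\phi(\ux,\uy)\wedge\phi(\ux,\uy')\to y_i=y_i'$ is provable in $T^*$. Because $T^*$ is the model completion of the universal theory $T$, the two theories prove the same universal sentences, so this closure is provable already in $T$, yielding implicit definability.

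Now Theorem~\ref{thm:beth} applies: $T$ is convex, equality interpolating, and has quantifier-free interpolation (since it admits a model completion), so it enjoys the Beth property for primitive formulae. It produces a term $t_{ij}(\ux)$ with $T\vdash L_{ij}(\ux)\wedge\phi(\ux,\uy)\to y_i=t_{ij}(\ux)$, which is exactly~\eqref{eq:term}.

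For the consequence, rewrite $\mathtt{ImplDef}_{\phi,y_i}^T(\ux)$ via its DNF and distribute the conjunction with $\exists\uy\,(\phi(\ux,\uy)\wedge\psi)$; since each $L_{ij}(\ux)$ is $\uy$-free it can be pushed under the existential, yielding $\bigvee_{j} \exists\uy\,(L_{ij}(\ux)\wedge \phi(\ux,\uy)\wedge \psi)$. Using~\eqref{eq:term}, each disjunct is $T$-equivalent to $\exists\uy\,(y_i=t_{ij}\wedge L_{ij}(\ux)\wedge \phi(\ux,\uy)\wedge \psi)$, which gives~\eqref{eq:terms}. The converse direction is immediate, as every disjunct of~\eqref{eq:terms} entails both $L_{ij}(\ux)$ (hence $\mathtt{ImplDef}_{\phi,y_i}^T(\ux)$) and $\exists\uy\,(\phi(\ux,\uy)\wedge \psi)$. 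The only conceptually delicate step in the whole argument is the transfer of the universal sentence from $T^*$ to $T$: this is precisely where the assumption that $T$ is universal enters, and it is the one ingredient that could not be replaced by a purely syntactic manipulation.
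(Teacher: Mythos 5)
Your proof is correct and follows essentially the same route as the paper's: both derive the implicit definability statement in $T^*$ from the tautology $L_{ij}\to\mathtt{ImplDef}_{\phi,y_i}^T$, transfer it to $T$ using that a universal theory and its model completion prove the same universal sentences, and then invoke Theorem~\ref{thm:beth} to extract the term $t_{ij}$. Your write-up is somewhat more explicit than the paper's (in identifying the primitive formula to which Beth definability is applied, and in spelling out the routine manipulations for the second claim), but there is no substantive difference.
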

\begin{proof} We have that $(\bigvee_j L_{ij})\leftrightarrow \mathtt{ImplDef}_{\phi,y_i}^T(\ux)$ is a tautology, hence
 from the definition of $\mathtt{ImplDef}_{\phi,y_i}^T(\ux)$, we have that 
 $$
 T^*\vdash L_{ij}(\ux)\to \forall \uy\, 
 \forall \uy' (\phi(\ux, \uy) \wedge \phi(\ux, \uy')\to y_i= y'_i)~~;
 $$
 however this formula is trivially equivalent to a universal formula ($L_{ij}$ does not depend on $\uy,\uy'$), 
  hence since $T$ and $T^*$ prove the same universal formulae, we get 
 $$ 
 T\vdash L_{ij}(\ux) \wedge 
\phi(\ux, \uy) \wedge \phi(\ux, \uy')\to y_i= y'_i~~.
 $$
 Using Beth definability property (Theorem~\ref{thm:beth}),
 we get~\eqref{eq:term}, as required, for some terms $t_{ij}(\ux)$.
Finally,
the second claim of the lemma
follows from~\eqref{eq:term} by trivial logical manipulations. 
\end{proof}

In all our concrete examples, the theory $T$ has decidable quantifier-free fragment (namely it is decidable whether a quantifier-free formula is a logical consequence of $T$ or not), thus the terms $t_{ij}$ mentioned in Lemma~\ref{lem2} can be computed just by enumerating all possible $\Sigma(\ux)$-terms: the computation terminates, because the above proof shows that the appropriate terms always exist. However, this is terribly inefficient and, from a practical point of view, one needs to have at disposal dedicated algorithms to find the required equality interpolating terms.
For some common theories (\EUF, Lisp-structures, linear real arithmetic), such algorithms are designed in~\cite{ym}; in~\cite{bgr-acmtocl} [Lemma 4.3 and Theorem 4.4], the algorithms for computing equality interpolating terms are connected to quantifier elimination algorithms in the case of universal theories admitting quantifier elimination. Still,  an extensive investigation on te topic seems to be  missed in the SMT literature.

\section{The Convex Combined Cover Algorithm}\label{sec:combined}

Let us now fix two theories $T_1, T_2$ over disjoint signatures $\Sigma_1, \Sigma_2$.
We assume that both of them satisfy the assumptions from the previous section, meaning that they are convex, stably infinite, equality interpolating, universal and admit  model completions $T^*_1, T^*_2$ respectively. We shall supply a cover algorithm for $T_1\cup T_2$ (thus proving that $T_1\cup T_2$ has a model completion too).

We need to compute a cover for $\exists \ue\,\phi(\ux, \ue)$, where $\phi$ is a conjunction of $\Sigma_1\cup\Sigma_2$-literals. By applying rewriting purification steps like
$$
\phi \Longrightarrow
\exists d\, (d=t \wedge \phi(d/t))
$$
(where $d$ is a fresh variable and $t$ is a pure term, i.e. it is  either a $\Sigma_1$- or a $\Sigma_2$-term), 
we can assume that our formula $\phi$ is of the kind $\phi_1\wedge \phi_2$, where $\phi_1$ is a $\Sigma_1$-formula and $\phi_2$ is a $\Sigma_2$-formula. 
Thus we need to compute a cover for a formula of the kind
\begin{equation}\label{eq:initial}
 \exists \ue\,(\phi_1(\ux, \ue)\wedge \phi_2(\ux, \ue)),
\end{equation}
where $\phi_i$ is a conjunction of $\Sigma_i$-literals ($i=1,2$). We also assume that both $\phi_1$ and $\phi_2$ contain the literals $e_i\neq e_j$ (for $i\neq j$) as a conjunct: this 
can be achieved by guessing a partition of the $\ue$ and by replacing each $e_i$ with the representative element of its equivalence class.

\begin{remark}
 It is not clear whether this preliminary guessing step can be avoided. In fact, Nelson-Oppen~\cite{NO} combined satisfiability  for \emph{convex} theories does not need it; however, combining covers algorithms is a more complicated problem than combining mere satisfiability algorithms and for technical reasons related to the correctness and completeness proofs below, we were forced to introduce guessing at this step.
\end{remark}

 To manipulate formulae, our algorithm employs acyclic explicit definitions as follows.
When we write $\mathtt{ExplDef}(\uz, \ux)$ (where $\uz, \ux$ are tuples of distinct variables), we mean any  formula of the kind (let $\uz:=z_1 \dots, z_m$)
$$
\bigwedge_{i=1}^m z_i =t_i(z_1,\dots,z_{i-1},\ux)
$$
where the term $t_i$ is pure 
(i.e. it is a $\Sigma_i$-term)
and only the variables $z_1, \dots, z_{i-1}, \ux$ can occur in it. When we assert a formula like $\exists \uz\;(\mathtt{ExplDef}(\uz, \ux)\wedge \psi(\uz, \ux))$, we are in fact in  the condition of recursively eliminating the variables $\uz$ from it via terms containing only the parameters $\ux$ 
(the 'explicit definitions' $z_i =t_i$ are in fact arranged acyclically).

A \emph{working formula} is a formula of the kind 
\begin{equation}\label{eq:working}
 \exists \uz \,(\mathtt{ExplDef}(\uz, \ux) \wedge \exists \ue \, (\psi_1(\ux, \uz, \ue) \wedge \psi_2(\ux,\uz, \ue)))~~,
\end{equation}
where $\psi_1$ is a conjunction of $\Sigma_1$-literals and $\psi_2$ is a conjunction of $\Sigma_2$-literals.
The variables $\ux$ are called \emph{parameters}, the variables $\uz$ are called \emph{defined variables} and the variables $\ue$ \emph{(truly) existential variables}. The parameters do not change during the execution of the algorithm. 
We assume that $\psi_1, \psi_2$ in a working formula~\eqref{eq:working} always contain the literals $e_i\neq e_j$ (for distinct $e_i, e_j$ from $\ue$) as a conjunct.

In our starting formula~\eqref{eq:initial}, there are no defined variables. However, if via some syntactic check it happens that some of the existential variables can be 
 recognized as defined, then it is useful to display them as such (this observation may avoid redundant cases - leading to inconsistent disjuncts - in the computations below).

A working formula like~\eqref{eq:working} is said to be \emph{terminal} iff for every existential variable $e_i\in \ue$ we have that
\begin{equation}\label{eq:ter}
 T_1\vdash \psi_1\to \neg\mathtt{ImplDef}_{\psi_1,e_i}^{T_1}(\ux,\uz) ~~{\rm and}~~T_2\vdash \psi_2\to \neg\mathtt{ImplDef}_{\psi_2,e_i}^{T_2}(\ux,\uz)~~~.
\end{equation}
Roughly speaking, we can say that in a terminal working formula, all variables which are not parameters are either explicitly definable or recognized as not implicitly definable by both theories; of course, a working formula with no existential variables is terminal.

\begin{lemma}\label{lem:toterminal}
 Every working formula is equivalent (modulo $T_1\cup T_2$) to a disjunction of terminal working formulae.
\end{lemma}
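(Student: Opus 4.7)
The plan is to induct on a lexicographic measure. For a working formula $W$ of shape~\eqref{eq:working}, define
\[
 I_s(W) := \{e_i \in \ue \;:\; T_s \not\vdash \psi_s \to \neg \mathtt{ImplDef}_{\psi_s, e_i}^{T_s}(\ux, \uz)\} \quad (s=1,2),
\]
and set $\mu(W) := (|\ue|, |I_1(W)| + |I_2(W)|)$ ordered lexicographically. The base case $|I_1(W)| + |I_2(W)| = 0$ is exactly the terminal condition~\eqref{eq:ter}. For the inductive step, pick $e_i \in I_1(W) \cup I_2(W)$; by symmetry assume $e_i \in I_1(W)$. Let $\bigvee_{j} L_{ij}(\ux, \uz)$ be the DNF of $\mathtt{ImplDef}_{\psi_1, e_i}^{T_1}(\ux, \uz)$ and $\bigvee_{k} \tilde L_{ik}(\ux, \uz)$ the DNF of its negation; each $L_{ij}$ and $\tilde L_{ik}$ is a conjunction of $\Sigma_1$-literals in the parameters. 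The propositional tautology $\mathtt{ImplDef} \vee \neg \mathtt{ImplDef}$, distributed over the DNFs, decomposes $W$ as $\bigvee_j (W \wedge L_{ij}) \vee \bigvee_k (W \wedge \tilde L_{ik})$, and the goal is to present each piece as a working formula of strictly smaller $\mu$.

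For a positive disjunct $W \wedge L_{ij}$, Lemma~\ref{lem2} (applied with parameters $\ux \cup \uz$) supplies a $\Sigma_1(\ux, \uz)$-term $t_{ij}$ with $T_1 \vdash L_{ij} \wedge \psi_1 \to e_i = t_{ij}$. Hence we move $e_i$ from the block of truly existential variables into the block of defined ones: append $e_i = t_{ij}(\ux, \uz)$ to $\mathtt{ExplDef}$, absorb $L_{ij}$ as an additional conjunct of $\psi_1$, and reorganize as a working formula over $\uz \cup \{e_i\}$ and $\ue \setminus \{e_i\}$. Since $|\ue|$ strictly decreases by one, $\mu$ strictly drops in its first coordinate, whatever the new $I_s$ turn out to be.

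For a negative disjunct $W \wedge \tilde L_{ik}$, simply absorb $\tilde L_{ik}$ into $\psi_1$, leaving $\uz$ and $\ue$ unchanged; call the new conjunction $\psi_1'' = \psi_1 \wedge \tilde L_{ik}$. The crucial observation is that, because $\tilde L_{ik}$ is independent of $\ue$ and $\ue'$, it factors out of the universal quantifier in the defining formula for implicit definability, giving
\[
 \mathtt{ImplDef}_{\psi_1'', e_h}^{T_1}(\ux, \uz) \;\equiv\; \bigl(\tilde L_{ik} \to \mathtt{ImplDef}_{\psi_1, e_h}^{T_1}(\ux, \uz)\bigr)
\]
for every $h$, modulo $T_1^*$ and therefore --- since both sides are quantifier-free --- modulo $T_1$. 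Setting $h = i$: because $\tilde L_{ik}$ is a DNF disjunct of $\neg \mathtt{ImplDef}_{\psi_1, e_i}^{T_1}$, we have $\tilde L_{ik} \vdash \neg \mathtt{ImplDef}_{\psi_1, e_i}^{T_1}$, so the right-hand side collapses to $\neg \tilde L_{ik}$; but $\psi_1'' \vdash \tilde L_{ik}$ trivially, hence $\psi_1'' \vdash \neg \mathtt{ImplDef}_{\psi_1'', e_i}^{T_1}$ and $e_i$ leaves $I_1$. For $h \neq i$ with $e_h \notin I_1(W)$, we have $\psi_1 \vdash \neg \mathtt{ImplDef}_{\psi_1, e_h}^{T_1}$, which together with $\psi_1'' \vdash \tilde L_{ik}$ gives $\psi_1'' \vdash \neg \mathtt{ImplDef}_{\psi_1'', e_h}^{T_1}$, so $e_h$ stays out of $I_1$. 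Thus the new $I_1$ is contained in $I_1(W) \setminus \{e_i\}$; $I_2$ is unchanged because $\psi_2$ is untouched; $|\ue|$ is unchanged; and $|I_1| + |I_2|$ strictly decreases, so $\mu$ drops in the second coordinate.

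By the induction hypothesis each sub-disjunct is $T_1 \cup T_2$-equivalent to a disjunction of terminal working formulae, and assembling these finitely many disjunctions proves the lemma. The one genuinely delicate step is the displayed equivalence relating $\mathtt{ImplDef}_{\psi_1'', e_h}^{T_1}$ to $\mathtt{ImplDef}_{\psi_1, e_h}^{T_1}$: it rests on $T_1^*$ eliminating quantifiers (so that $\mathtt{ImplDef}$ really captures the universal definability statement) and on $\tilde L_{ik}$ being $\ue$-free, so that it commutes with the universal quantifier on $\ue, \ue'$. All remaining manipulations are syntactic bookkeeping that preserves the working-formula shape~\eqref{eq:working}.
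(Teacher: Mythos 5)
Your proof is correct and follows essentially the same route as the paper's: the positive DNF disjuncts are handled exactly as in the paper's alternative (2.i) via Lemma~\ref{lem2}, and the ``delicate'' displayed equivalence you rely on for the negative disjuncts is precisely the paper's Lemma~\ref{lem:app}, proved in the appendix by the same factoring-out argument. The only difference is bookkeeping: you peel off one $\neg\mathtt{ImplDef}$ conjunct at a time under a lexicographic measure, whereas the paper's alternative (1) conjoins a DNF disjunct of the full conjunction $\bigwedge_{e_i\in\ue}\neg\mathtt{ImplDef}$ for both theories in a single terminating step.
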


\begin{proof} We only sketch the proof of this Lemma  (see the Appendix~\ref{sec:approofs} for full details), by describing the algorithm underlying it. 
 To compute the required terminal working formulae, it is sufficient to apply the following non-deterministic procedure (the output is  the disjunction of all possible outcomes). 
The non-deterministic procedure applies one of the following alternatives.
\begin{description}
 \item[{\rm (1)}] Update $\psi_1$ by adding to it a disjunct from the DNF of 
 $\bigwedge_{e_i\in \ue} \neg 
 \mathtt{ImplDef}_{\psi_1,e_i}^{T_1}(\ux,\uz)$
 and $\psi_2$ by adding to it a disjunct from the DNF of 
 $\bigwedge_{e_i\in \ue} \neg \mathtt{ImplDef}_{\psi_2,e_i}^{T_2}(\ux,\uz)$;
 \item[{\rm (2.i)}] Select $e_i\in \ue$ and $h\in\{1,2\}$; then update $\psi_h$ by adding to it a disjunct $L_{ij}$ from the DNF 
 of $\mathtt{ImplDef}_{\psi_h,e_i}^{T_h}(\ux,\uz)$; the equality $e_i= t_{ij}$ (where $t_{ij}$ is the term mentioned in Lemma~\ref{lem2})\footnote{ 
 Lemma~\ref{lem2} is used taking as $\uy$ the tuple $\ue$, as $\ux$ the tuple $\ux,\uz$, as  $\phi(\ux, \uy)$ the formula $\psi_h(\ux, \uz,\ue)$ and as $\psi$ the formula $\psi_{3-h}$.
 }
 is added to $\mathtt{ExplDef}(\uz, \ux)$; the variable $e_i$ becomes in this way part of the defined variables.  
\end{description}
If alternative (1) is chosen, the procedure stops, otherwise it is recursively applied again and again (we have one truly existential variable less after applying alternative (2.i), so we eventually terminate).
\end{proof}

Thus we are left to the problem of computing a cover of a terminal working formula; this  problem is solved in the following proposition: 
\begin{proposition}\label{prop:terminal}
 A cover of a terminal working formula~\eqref{eq:working} can be obtained just by unravelling the explicit definitions of the  variables $\uz$  from  the formula
 \begin{equation}\label{eq:combinedcover}
 \exists \uz\; (\mathtt{ExplDef}(\uz, \ux) \wedge \theta_1(\ux, \uz) \wedge \theta_2(\ux,\uz))
\end{equation}
where $\theta_1(\ux, \uz)$ is the $T_1$-cover of
 $\exists \ue \psi_1(\ux, \uz, \ue)$ and $\theta_2(\ux,\uz)$ is the $T_2$-cover of
 $\exists \ue \psi_2(\ux, \uz, \ue)$. 
\end{proposition}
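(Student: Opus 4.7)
The plan is to verify the two clauses of the Cover-by-Extensions Lemma~\ref{lem:cover} for the quantifier-free $\Sigma_1\cup\Sigma_2$-formula $\chi(\ux)$ obtained from~\eqref{eq:combinedcover} by unravelling the acyclic explicit definitions; since each $z_i$ is defined by a pure term over $\ux,z_1,\dots,z_{i-1}$, unravelling indeed eliminates all the $z_i$'s and produces a quantifier-free formula in $\ux$.

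\textbf{Residue condition.} Assume that the original formula $\exists\uz\exists\ue\,(\mathtt{ExplDef}(\uz,\ua)\wedge\psi_1\wedge\psi_2)$ holds in some $\cM\models T_1\cup T_2$ via witnesses $\ub,\uc$. Since $\theta_i$ is a $T_i$-cover of $\exists\ue\,\psi_i(\ux,\uz,\ue)$, the entailment $T_i\models\psi_i\to\theta_i$ gives $\cM\models\mathtt{ExplDef}(\ub,\ua)\wedge\theta_1(\ua,\ub)\wedge\theta_2(\ua,\ub)$, and hence $\cM\models\chi(\ua)$.

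\textbf{Extension condition.} Given $\cM\models T_1\cup T_2$ with $\cM\models\chi(\ua)$, choose $\ub\in|\cM|$ witnessing the $\exists\uz$, so that $\cM\models\theta_1(\ua,\ub)\wedge\theta_2(\ua,\ub)$. Applying Lemma~\ref{lem:cover}(ii) in each $T_i$ to the $\Sigma_i$-reduct of $\cM$, I obtain a $\Sigma_i$-extension $\cN_i\supseteq\cM|_{\Sigma_i}$ with $\cN_i\models T_i$ and a tuple $\uc_i\in|\cN_i|$ such that $\cN_i\models\psi_i(\ua,\ub,\uc_i)$. The task is then to merge $\cN_1$ and $\cN_2$ into a single $\Sigma_1\cup\Sigma_2$-model $\cN\models T_1\cup T_2$ extending $\cM$ in which a \emph{common} tuple $\uc$ realizes both $\psi_1(\ua,\ub,\uc)$ and $\psi_2(\ua,\ub,\uc)$.

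Here the terminal hypothesis~\eqref{eq:ter} is decisive: in each $T_i$ none of the $e_j$ is implicitly defined by $\psi_i$ from $\ua,\ub$, so Lemma~\ref{lem1} allows the witnesses $\uc_i$ to be relocated outside any prescribed subset of the support. Together with the disequalities $e_j\neq e_k$ that the preprocessing guessing step has built into both $\psi_1$ and $\psi_2$, this freedom lets me pick a single tuple $\uc$ of distinct fresh elements, disjoint from $|\cM|$, to serve simultaneously for both theories. I then build an alternating chain $\cM=\cM_0\subseteq\cM_1\subseteq\cM_2\subseteq\cdots$ of $\Sigma_1\cup\Sigma_2$-structures (with $|\cM_1|=|\cM|\cup\uc$), realizing at each odd stage a pending $T_1$-existential and at each even stage a pending $T_2$-existential on the opposite reduct; at every step strong amalgamation for $T_i$ (the content of equality interpolating by Theorem~\ref{thm:beth} and the engine of Theorem~\ref{thm:sub-amalgamation_transfer}), together with stable infiniteness (Lemma~\ref{lem:si}), lets me place the newly introduced elements coherently on the other side. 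Taking the union $\cN=\bigcup_n\cM_n$ and invoking the universality of $T_1$ and $T_2$ (so that each reduct of $\cN$ is a model of its theory) yields $\cN\models T_1\cup T_2$ with the common tuple $\uc$ realizing $\psi_1\wedge\psi_2$, as required.

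\textbf{Main obstacle.} The critical point is the identification of the witnesses $\uc_1$ and $\uc_2$ coming from the two separate $T_i$-extensions into a single tuple $\uc$ in $\cN$. Were some $e_j$ implicitly defined in, say, $T_1$ from $\ua,\ub$, its value in any $T_1$-extension would be forced and could clash with the value chosen (or forced) on the $T_2$-side. The terminal condition rules out exactly this source of conflict, and the preparatory Lemma~\ref{lem:toterminal} is indispensable for ensuring that we are in this favourable situation before invoking the present proposition.
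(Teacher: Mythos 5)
Your proposal follows the same overall architecture as the paper's proof: verify the two clauses of Lemma~\ref{lem:cover}, use the terminal condition together with Lemma~\ref{lem1} to obtain, on each side, an extension realizing $\psi_i$ with witnesses lying entirely outside $\vert\cM\vert$, and then fuse the two extensions Nelson--Oppen style. One step, however, is asserted rather than proved, and it is exactly the place where the uniformity of the covers $\theta_i$ does its work. To apply Lemma~\ref{lem1} you need $\cM\not\models\mathtt{ImplDef}_{\psi_i,e_j}^{T_i}(\ua,\ub)$, whereas the terminal condition~\eqref{eq:ter} only gives $T_i\vdash\psi_i\to\neg\mathtt{ImplDef}_{\psi_i,e_j}^{T_i}(\ux,\uz)$, and $\cM$ is not assumed to satisfy $\psi_i(\ua,\ub,\cdot)$ for any witnesses --- it only satisfies $\theta_i(\ua,\ub)$. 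The missing link is that $\neg\mathtt{ImplDef}_{\psi_i,e_j}^{T_i}$ is quantifier-free and does not contain the $\ue$, hence it is a residue of $\exists\ue\,\psi_i$, hence it is implied (modulo $T_i$) by the cover $\theta_i$; only then does $\cM\models\theta_i(\ua,\ub)$ yield the hypothesis of Lemma~\ref{lem1}. This is a one-line repair, but it should be made explicit.

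Your fusion step also takes a genuinely different route from the paper's. The paper reduces to a countable $\cM$ by L\"owenheim-Skolem, uses Lemma~\ref{lem:si} to arrange that $\vert\cN_1\vert\setminus\vert\cM\vert$ and $\vert\cN_2\vert\setminus\vert\cM\vert$ are both countable, and then transports the entire $\Sigma_2$-structure of $\cN_2$ onto $\vert\cN_1\vert$ along a single bijection fixing $\vert\cM\vert$ and matching the witness tuples (which have the same length of pairwise distinct elements thanks to the guessed disequalities $e_i\neq e_j$); the result is a $T_1\cup T_2$-model in one shot. You instead identify only the witness tuples and then complete to a $T_1\cup T_2$-model by an alternating chain of extensions driven by stable infiniteness --- essentially the argument the paper itself deploys later in Lemma~\ref{lem:stable}. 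Both routes are viable; the chain avoids the explicit cardinality bookkeeping but must be set up with care, since each intermediate stage carries only one of the two signatures and it is the union of the two interleaved subchains that inherits both structures. Note also that strong amalgamation, which you invoke at the chain steps, is not what is needed there: stable infiniteness (Lemma~\ref{lem:si}) suffices, and strong amalgamation is used only inside the proof of Lemma~\ref{lem1}.
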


\begin{proof}
 In order to show that Formula~\eqref{eq:combinedcover} is the $T_1\cup T_2$-cover of a terminal working formula~\eqref{eq:working}, we prove, by using the Cover-by-Extensions Lemma~\ref{lem:cover}, that, for every $T_1\cup T_2$-model $\cM$, for every tuple $\ua, \uc$ from $\vert \cM\vert$ such that $\cM\models  \theta_1(\ua, \uc) \wedge \theta_2(\ua,\uc)$  there is an extension $\cN$ of $\cM$ such that $\cN$ is still a model of $T_1\cup T_2$ and 
 $\cN\models \exists \ue (\psi_1(\ua, \uc, \ue) \wedge \psi_2(\ua,\uc, \ue))$.  
 By a L\"owenheim-Skolem argument, since our languages are countable, we can suppose that $\cM$ is at most countable and actually that it is countable by stable infiniteness of our theories, see Lemma~\ref{lem:si} (the fact that $T_1\cup T_2$ is stably infinite in case both $T_1, T_2$ are such, comes from the proof of Nelson-Oppen combination result, see~\cite{NO},\cite{TinHar}, \cite{Ghil05}).
 
  According to the conditions~\eqref{eq:ter} and the definition of a cover  (notice that the formulae $\neg\mathtt{ImplDef}_{\psi_h,e_i}^{T_h}(\ux,\uz)$ do not contain the $\ue$ and are quantifier-free) we have that 
 $$
 T_1\vdash \theta_1\to \neg\mathtt{ImplDef}_{\psi_1,e_i}^{T_1}(\ux,\uz) ~~{\rm and}~~T_2\vdash \theta_2\to 
 \neg\mathtt{ImplDef}_{\psi_2,e_i}^{T_2}(\ux,\uz)
$$
(for every $e_i\in \ue$). Thus, since 
  $\cM\not \models  \mathtt{ImplDef}_{\psi_1,e_i}^{T_1}(\ua,\uc)$ and 
 $\cM\not \models  \mathtt{ImplDef}_{\psi_2,e_i}^{T_2}(\ua,\uc)$ 
  holds for every $e_i\in \ue$, 
  we can apply Lemma~\ref{lem1} and conclude that 
 there exist a $T_1$-model $\cN_1$ and a $T_2$-model $\cN_2$ such that $\cN_1\models \psi_1(\ua, \uc, \ub_1)$ and 
 $\cN_2\models \psi_2(\ua, \uc, \ub_2)$ for tuples $\ub_1\in \vert \cN_1\vert$ and $\ub_2\in \vert \cN_2\vert$, both disjoint from $\vert\cM\vert$.  By a L\"owenheim-Skolem argument,  we can suppose that $\cN_1, \cN_2$ are countable and by 
 Lemma~\ref{lem:si} even that they are both countable extensions of $\cM$.
 
 The tuples $\ub_1$ and $\ub_2$ have equal length because the $\psi_1, \psi_2$ from our working formulae entail  $e_i \neq e_j$, where $e_i, e_j$ are different existential variables. 
 Thus there is a bijection $\iota: \vert \cN_1\vert \to \vert \cN_2\vert$ fixing all elements in $\cM$ and mapping component-wise the $\ub_1$ onto the $\ub_2$. But this means that, exactly 
 as it happens in the proof of the completeness of the Nelson-Oppen combination procedure,  the $\Sigma_2$-structure on $\cN_2$ can be moved back via $\iota^{-1}$ to $\vert \cN_1\vert$ in such a way that the $\Sigma_2$-substructure from $\cM$ is fixed and in such a way that the tuple $\ub_2$ is mapped to the tuple $\ub_1$. In this way, $\cN_1$ becomes a $\Sigma_1\cup\Sigma_2$-structure which is a model of $T_1\cup T_2$ and which is such that $\cN_1\models \psi_1(\ua, \uc, \ub_1) \wedge \psi_2(\ua,\uc, \ub_1)$, as required.
\end{proof}

From Lemma~\ref{lem:toterminal}, Proposition~\ref{prop:terminal} and 
Theorem~\ref{prop:qe}, we immediately get

\begin{theorem}\label{thm:main} Let $T_1, T_2$ be convex, stably infinite, equality interpolating, universal theories over disjoint signatures admitting a model completion.  Then $T_1\cup T_2$ admits a model completion too. Covers in $T_1\cup T_2$ can be effectively computed as shown above.
\end{theorem}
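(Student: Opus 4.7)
The plan is to bundle together the three results cited just before the theorem—Lemma~\ref{lem:toterminal}, Proposition~\ref{prop:terminal}, and Theorem~\ref{prop:qe}—into a uniform cover algorithm for $T_1 \cup T_2$, and then to invoke model-completion/uniform-interpolation equivalence to conclude. First, given an arbitrary primitive $(\Sigma_1 \cup \Sigma_2)$-formula $\exists \ue\,\phi(\ux,\ue)$, I purify $\phi$ by introducing fresh existential variables $d$ together with equalities $d = t$ for every mixed subterm $t$; the fresh variables are absorbed into $\ue$, and $\phi$ becomes $\phi_1(\ux,\ue) \wedge \phi_2(\ux,\ue)$ with each $\phi_i$ a $\Sigma_i$-conjunction of literals. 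Next, I guess a partition of the $\ue$ and substitute class representatives, so that both $\phi_1$ and $\phi_2$ contain the disequalities $e_i \neq e_j$ for every pair of distinct surviving existential variables. The result is a finite disjunction of working formulas in the sense of~\eqref{eq:working}, with no defined variables yet.

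Second, I apply Lemma~\ref{lem:toterminal} to each working formula in this disjunction, obtaining (modulo $T_1 \cup T_2$) an equivalent disjunction of \emph{terminal} working formulas. Effectiveness here rests on the computability of $\mathtt{ImplDef}_{\psi_h, e_i}^{T_h}$ (available via quantifier elimination in each $T_h^*$, which exists by hypothesis) and on the computability of the equality-interpolating terms $t_{ij}$ produced by Lemma~\ref{lem2} (which exist by the equality interpolating hypothesis, through the Beth definability characterization in Theorem~\ref{thm:beth}). For each terminal working formula I invoke Proposition~\ref{prop:terminal}: letting $\theta_1$ be the $T_1$-cover of $\exists \ue\,\psi_1(\ux,\uz,\ue)$ and $\theta_2$ the $T_2$-cover of $\exists \ue\,\psi_2(\ux,\uz,\ue)$ (these exist because each $T_i$ has a model completion, hence by Theorem~\ref{prop:qe} has uniform quantifier-free interpolation, and they are effectively computable by assumption), the formula~\eqref{eq:combinedcover} is a cover of that terminal working formula. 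Unravelling the acyclic block $\mathtt{ExplDef}(\uz,\ux)$ by iterated substitution turns~\eqref{eq:combinedcover} into a bona fide quantifier-free formula in the parameters $\ux$, and the disjunction over all branches produced by the partition guess and by Lemma~\ref{lem:toterminal} is the desired $(T_1 \cup T_2)$-cover of $\exists \ue\,\phi(\ux,\ue)$.

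Third, having shown that every primitive $(\Sigma_1 \cup \Sigma_2)$-formula admits a cover modulo $T_1 \cup T_2$, I observe that $T_1 \cup T_2$ is a universal theory (a union of universal theories is universal). Theorem~\ref{prop:qe} then yields at once a model completion of $T_1 \cup T_2$, axiomatized by the sentences $\forall \uy\,(\psi(\uy) \to \exists \ue\,\phi(\ue,\uy))$ with $\psi$ ranging over the covers we just constructed.

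The main obstacle has already been discharged upstream: Lemma~\ref{lem:toterminal} absorbs all variables that either component would recognize as implicitly defined, and Proposition~\ref{prop:terminal} carries out the semantic amalgamation via the Nelson--Oppen-style bijection between the $\Sigma_1$- and $\Sigma_2$-witness tuples, made possible precisely because the guessing step forced $e_i \neq e_j$ and because stable infiniteness plus Lemma~\ref{lem:si} let us arrange the two witness extensions over a common countable support. What remains in the present theorem is purely a matter of assembling these pieces, verifying that each step is effective, and citing Theorem~\ref{prop:qe} for the model completion.
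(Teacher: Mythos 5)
Your proposal is correct and follows essentially the same route as the paper, which obtains Theorem~\ref{thm:main} directly by assembling the purification and partition-guessing preprocessing with Lemma~\ref{lem:toterminal}, Proposition~\ref{prop:terminal}, and Theorem~\ref{prop:qe}. Your additional remarks on effectiveness (quantifier elimination in $T_h^*$ for the $\mathtt{ImplDef}$ formulae and computability of the terms $t_{ij}$) match the paper's discussion following Lemma~\ref{lem2}.
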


Notice that the input cover algorithms in the above combined cover computation algorithm are used not only in the final step described in Proposition~\ref{prop:terminal}, but also every time we need to compute 
a formula $\mathtt{ImplDef}_{\psi_h,e_i}^{T_h}(\ux,\uz)$: according to its definition, this formula is obtained by eliminating quantifiers in $T_i^*$ from~\eqref{eq:def} (this is done via a cover computation, reading $\forall$ as $\neg \exists\neg$). In practice, implicit
definability is not very frequent,
so that in many concrete cases
$\mathtt{ImplDef}_{\psi_h,e_i}^{T_h}(\ux,\uz)$ is trivially equivalent to $\bot$ (in such cases, Step (2.i) above can obviously be disregarded). 

%
%
%

 \subsubsection{An Example.}\label{sec:example}

We now analyze an example in detail. 
Our results apply for instance to the case where $T_1$ is $\EUF(\Sigma)$ 
and $T_2$ is linear real arithmetic. We recall that covers are computed  in real arithmetic by quantifier elimination, whereas for $\EUF(\Sigma)$ one can apply the superposition-based algorithm from~\cite{cade19}.
 Let us show that  the cover of
 \begin{equation}\label{eq:ex0}
 \exists e_1\cdots \exists e_4\;
~ \left( 
\begin{aligned}
&
e_1= f(x_1)\;\wedge\; e_2= f(x_2)\; \wedge\; 
\\ & \wedge\,
f(e_3)=e_3\,\wedge\, \; f(e_4)= x_1\;\wedge
\\ & \wedge\;
x_1+e_1\leq e_3\;\wedge\; e_3\leq x_2+e_2\;\wedge \;e_4=x_2+e_3
 \end{aligned}
 \right)
\end{equation}
is the following formula
\begin{equation}\label{eq:res}
\begin{aligned}
&
[x_2=0 \; \wedge \; f(x_1)=x_1 \;\wedge \; x_1 \leq 0 \; \wedge\; x_1\leq f(0)]~\vee~
\\ & \vee ~
[x_1+f(x_1)< x_2+f(x_2)\,\wedge \,x_2\neq 0]~\vee
\\ & \vee ~
\begin{bmatrix} x_2\neq 0\, \wedge\, x_1+f(x_1)= x_2+f(x_2)\, \wedge\, f(2x_2+f(x_2))=x_1\, \wedge\,
\\   ~ \wedge\, f(x_1+f(x_1))= x_1+f(x_1)\end{bmatrix}
\end{aligned}
\end{equation}
%

Formula~\eqref{eq:ex0} is already purified.
 Notice also that the variables $e_1, e_2$ are in fact already explicitly  defined (only $e_3, e_4$ are truly existential variables). 
 
 We first make the partition guessing. There is no need to involve defined variables into the partition guessing, hence we need to consider only two partitions; they  are described by the following  formulae:
 \begin{eqnarray*}
  P_1(e_3, e_4) ~&\equiv&~ e_3\neq e_4 
  \\
  P_2(e_3, e_4) ~&\equiv&~ e_3= e_4 
 \end{eqnarray*}
We first analyze  \textbf{the case of $P_1$}.  The formulae $\psi_1$ and $\psi_2$ to which we need to apply exhaustively Step (1) and Step (2.i) of our algorithm are:
\begin{eqnarray*}
 \psi_1 ~&\equiv&~f(e_3)=e_3\,\wedge\, \; f(e_4)= x_1\;\wedge\; e_3\neq e_4
 \\
 \psi_2 ~&\equiv&~x_1+e_1\leq e_3\;\wedge\; e_3\leq x_2+e_2\;\wedge \;e_4=x_2+e_3\;
 \wedge\; e_3\neq e_4
\end{eqnarray*}
 We first compute the implicit definability formulae for the truly existential variables with respect to both $T_1$ and $T_2$.
\vspace{-1mm}
\begin{description}
 \item[-] We first consider $\mathtt{ImplDef}_{\psi_1,e_3}^{T_1}(\ux,\uz)$.  Here we show that the cover of the negation of formula~\eqref{eq:def} is equivalent to $\top$ 
 (so that $\mathtt{ImplDef}_{\psi_1,e_3}^{T_1}(\ux,\uz)$ is equivalent to $\bot$).
 We must quantify over truly existential variables and their duplications, thus we need to compute the cover of
 $$
 f(e'_3)=e'_3\wedge f(e_3)=e_3\wedge  f(e'_4)=x_1\wedge f(e_4)=x_1\wedge e_3\neq e_4 \wedge e'_3 \neq e'_4\wedge  e'_3\neq e_3
 $$
  This is a saturated set according to the superposition based procedure of~\cite{cade19}, hence the result is $\top$, as claimed.
 \item[-] The formula $\mathtt{ImplDef}_{\psi_1,e_4}^{T_1}(\ux,\uz)$ is also equivalent to $\bot$, by the same  argument as above.
 \item[-] To compute $\mathtt{ImplDef}_{\psi_2,e_3}^{T_2}(\ux,\uz)$ we  use Fourier-Motzkin quantifier elimination. We need to eliminate the variables $e_3, e'_3, e_4, e'_4$ (intended as existentially quantified variables) from 
 \begin{equation*}
 \begin{aligned}
 &
 x_1+e_1\leq e'_3\leq x_2+e_2\wedge x_1+e_1\leq e_3\leq x_2+e_2\wedge e'_4=x_2+e'_3\wedge
 \\ & \wedge
 e_4=x_2+e_3 \wedge e_3\neq e_4 \wedge e'_3 \neq e'_4\wedge e'_3\neq e_3~~.
 \end{aligned}
 \end{equation*}
 This gives $x_1+e_1\neq x_2+e_2 \wedge x_2\neq 0$, so that $\mathtt{ImplDef}_{\psi_2,e_3}^{T_2}(\ux,\uz)$ is 
 $x_1+e_1= x_2+e_2 \wedge x_2\neq 0$. The corresponding equality interpolating term for $e_3$ is $x_1+e_1$. 
 \item[-] The formula $\mathtt{ImplDef}_{\psi_2,e_4}^{T_2}(\ux,\uz)$ is also equivalent to 
 $x_1+e_1= x_2+e_2\wedge x_2\neq 0$ and the  equality interpolating term for $e_4$ is $x_1+e_1+x_2$.
\end{description}
 
So, if we apply Step 1 we get
 \begin{equation}\label{eq:ex1}
 \exists e_1\cdots \exists e_4
 \left( 
\begin{aligned}
&
e_1= f(x_1)\;\wedge\; e_2=f(x_2)\; \wedge
\\ & \wedge\,
f(e_3)=e_3\,\wedge\, \; f(e_4)= x_1\;\wedge \; e_3\neq e_4\;\wedge
\\ & \wedge\,
 x_1+e_1\leq e_3\,\wedge\, e_3\leq x_2+e_2\,\wedge \, e_4=x_2+e_3
 \, \wedge \, x_1+e_1\neq x_2+e_2
 \end{aligned}
 \right)
\end{equation}
(notice that the literal $x_2\neq 0$ is entailed by $\psi_2$, so we can simplify it to $\top$ in $\mathtt{ImplDef}_{\psi_2,e_3}^{T_2}(\ux,\uz)$ and $\mathtt{ImplDef}_{\psi_2,e_4}^{T_2}(\ux,\uz)$).
 If we apply Step
 (2.i) (for i=3), we get (after removing implied equalities)
 \begin{equation}\label{eq:ex2}
 \exists e_1\cdots \exists e_4\;
~ \left( 
\begin{aligned}
&
e_1=f(x_1)\;\wedge\; e_2=f(x_2)\; \wedge\; e_3=x_1+e_1 \;\wedge
\\ & \wedge\,
f(e_3)=e_3\,\wedge\, \; f(e_4)= x_1\;\wedge\; e_3\neq e_4\;\wedge
\\ & \wedge\;
e_4=x_2+e_3 \;\wedge\;
 x_1+e_1= x_2+e_2
 \end{aligned}
 \right)
\end{equation}
Step (2.i) (for i=4)  gives a formula logically equivalent to~\eqref{eq:ex2}.
Notice that~\eqref{eq:ex2} is terminal too, because all existential variables are now explicitly defined (this is a 
lucky
side-effect of the fact that $e_3$ has been moved to the defined variables).
Thus the exhaustive application of Steps (1) and (2.i) is concluded.

Applying the final step of Proposition~\ref{prop:terminal} to~\eqref{eq:ex2} is quite easy: it is sufficient to unravel the acyclic definitions. The result, after little simplification, is
\begin{equation*}
\begin{aligned}
& x_2\neq 0\, \wedge\, x_1+f(x_1)= x_2+f(x_2)\,  \wedge\, \\ & \wedge\, f(x_2+f(x_1+f(x_1)))=x_1\, \wedge\, f(x_1+f(x_1))= x_1+f(x_1);
 \end{aligned}
\end{equation*}
this can be further simplified to
\begin{equation}\label{eq:res11}
\begin{aligned}
 & x_2\neq 0\, \wedge\, x_1+f(x_1)= x_2+f(x_2)\, \wedge \\ & \wedge\, f(2x_2+f(x_2))=x_1\, \wedge\, f(x_1+f(x_1))= x_1+f(x_1);
 \end{aligned}
\end{equation}

As to formula~\eqref{eq:ex1}, we need to apply  the final cover computations mentioned in Proposition~\ref{prop:terminal}. The formulae $\psi_1$ and $\psi_2$ are now
\begin{eqnarray*}
 \psi'_1~\equiv~
 &
 f(e_3)=e_3\;\wedge\, \; f(e_4)= x_1\;\wedge\, \; e_3\neq e_4~~~~~~~~~~~~~~~~
 \\
 \psi'_2~\equiv~
 &
 x_1+e_1\leq e_3 \leq x_2+e_2\,\wedge \, e_4=x_2+e_3
 \, \wedge \, x_1+e_1\neq x_2+e_2\;\wedge \; e_3\neq e_4
\end{eqnarray*}
The $T_1$-cover of $\psi_1'$ is $\top$. For the $T_2$-cover of $\psi_2'$,  
eliminating with Fourier-Motzkin the variables $e_4$ and  $e_3$, we get
\begin{equation*}
x_1+e_1< x_2+e_2\,\wedge \,x_2\neq 0
\end{equation*}
which becomes
\begin{equation}\label{eq:res12}
x_1+f(x_1)< x_2+f(x_2)\,\wedge \,x_2\neq 0
\end{equation}
after unravelling the explicit definitions of $e_1, e_2$.
Thus, \emph{the analysis of the case of the partition $P_1$ gives, as a result, the disjunction of~\eqref{eq:res11} and 
~\eqref{eq:res12}.}

We now analyze  \textbf{the case of $P_2$}. Before proceeding, we replace $e_4$ with $e_3$ (since $P_2$ precisely asserts that these two variables coincide); our formulae $\psi_1$ and $\psi_2$ become
\begin{eqnarray*}
 \psi''_1 ~&\equiv&~f(e_3)=e_3\,\wedge\, \; f(e_3)= x_1
 \\
 \psi''_2 ~&\equiv&~x_1+e_1\leq e_3\;\wedge\; e_3\leq x_2+e_2\;\wedge \;0=x_2\;
\end{eqnarray*}
From $\psi''_1$ we deduce $e_3=x_1$, thus we can move $e_3$ to the explicitly defined variables (this avoids useless calculations:  the implicit definability condition for  variables  having an entailed explicit definition is obviously $\top$, so making case split on it produces either tautological consequences or inconsistencies). In this way we get the terminal working formula 
\begin{equation}\label{eq:ex3}
 \exists e_1\cdots \exists e_3
 \left( 
\begin{aligned}
&
e_1= f(x_1)\;\wedge\; e_2=f(x_2)\; \wedge \; e_3=x_1
\\ & \wedge\,
f(e_3)=e_3\,\wedge\, \; f(e_3)= x_1\;\wedge \; 
\\ & \wedge\,
 x_1+e_1\leq e_3\,\wedge\, e_3\leq x_2+e_2\,\wedge \, 0=x_2
 \end{aligned}
 \right)
\end{equation}
Unravelling the explicit definitions, we get (after exhaustive simplifications)
\begin{equation}\label{eq:res2}
 x_2=0 \; \wedge \; f(x_1)=x_1 \;\wedge \; x_1 \leq 0 \; \wedge\; x_1\leq f(0)  
\end{equation}

Now, the disjunction of~\eqref{eq:res11},\eqref{eq:res12} and~\eqref{eq:res2} is precisely the final result~\eqref{eq:res} claimed above.
This concludes our detailed analysis of our example. 

Notice that the example shows that combined cover computations may introduce 
terms with arbitrary alternations of symbols from both theories
 (like 
 $f(x_2+f(x_1+f(x_1)))$ above). The point is that when a variable becomes explicitly definable via a term in one of the theories, then using such additional variable may in turn cause some other variables  to become explicitly definable via terms from the other theory, and so on and so forth; when ultimately the explicit definitions are unraveled, highly nested terms arise with many symbol alternations from both theories.

\noindent\subsubsection{The Necessity of the Equality Interpolating Condition.}\label{sec:necessity}

The following  result shows that  equality interpolating is   a necessary condition for a transfer result, in the sense that it is already required for minimal combinations with signatures adding uninterpreted symbols:



\begin{theorem}\label{thm:necessary} Let $T$ be a convex, stably infinite,  universal theory admitting a model completion
and let $\Sigma$ be a signature disjoint from the signature of $T$ containing at least a unary predicate symbol.
Then $T\cup \EUF(\Sigma)$ admits a model completion iff $T$ is equality interpolating.
\end{theorem}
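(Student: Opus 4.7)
The plan is to obtain this theorem as a short corollary by assembling results already established earlier in the paper, since both directions have been prepared for.

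For the forward implication, I would apply Theorem~\ref{thm:main} with $T_1 := T$ and $T_2 := \EUF(\Sigma)$. The theory $\EUF(\Sigma)$ is universal, convex, stably infinite, has quantifier-free interpolation, and is equality interpolating (this was noted in the discussion after Theorem~\ref{prop:strong_amalgamation_needed}); moreover it admits a model completion, either because it has uniform quantifier-free interpolation in the sense of Theorem~\ref{prop:qe}, or simply because it itself enjoys quantifier elimination. The hypotheses on $T$ (convex, stably infinite, universal, admitting a model completion, and, by the assumed equality interpolating property) together with the disjointness of the signatures therefore match exactly the premises of Theorem~\ref{thm:main}. Hence $T\cup \EUF(\Sigma)$ admits a model completion.

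For the converse, suppose $T\cup \EUF(\Sigma)$ admits a model completion. By Theorem~\ref{prop:qe}, $T\cup \EUF(\Sigma)$ then has uniform quantifier-free interpolation, and in particular it has ordinary quantifier-free interpolation. Now I would like to invoke Theorem~\ref{prop:strong_amalgamation_needed}, whose conclusion is precisely that $T$ must be equality interpolating. To apply that theorem I have to verify its assumptions on $T$ itself: $T$ is stably infinite, universal, convex by hypothesis, and has quantifier-free interpolation because the existence of a model completion for $T$ already implies this (as recalled in the paragraph preceding Theorem~\ref{prop:qe}). All premises are in place, so Theorem~\ref{prop:strong_amalgamation_needed} yields the equality interpolating property for $T$.

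I do not anticipate any real obstacle: the content of the statement is that the combined result Theorem~\ref{thm:main} is sharp in the minimal setting where the second theory just adds an uninterpreted predicate, and the necessity part is essentially repackaging Theorem~\ref{prop:strong_amalgamation_needed} in the language of model completions via Theorem~\ref{prop:qe}. The only small care required is to make sure the assumptions of Theorem~\ref{prop:strong_amalgamation_needed} transfer correctly, which they do because ``admitting a model completion'' is strictly stronger than ``having quantifier-free interpolation'' for universal theories.
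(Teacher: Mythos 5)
Your proof is correct and follows essentially the same route as the paper: sufficiency via Theorem~\ref{thm:main} applied to $T_1:=T$ and $T_2:=\EUF(\Sigma)$, and necessity by passing from the model completion of $T\cup\EUF(\Sigma)$ to (uniform, hence ordinary) quantifier-free interpolation via Theorem~\ref{prop:qe} and then invoking Theorem~\ref{prop:strong_amalgamation_needed}. One parenthetical aside is off --- $\EUF(\Sigma)$ does not in general enjoy quantifier elimination (e.g.\ $\exists y\,(P(y)\wedge y\neq x)$ is not eliminable in the pure equality theory) --- but your primary justification for why it admits a model completion (existence of covers, as in the cited work) is the one the paper uses and suffices.
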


\begin{proof} 
The necessity can be shown by using the following argument. 
By Theorem~\ref{prop:qe}, 
 $T\cup \EUF(\Sigma)$ has uniform quantifier-free interpolation, hence also ordinary quantifier-free interpolation.
  We can now apply Theorem~\ref{prop:strong_amalgamation_needed} and get that $T$ must be equality interpolating. 
 Conversely, the sufficiency comes from Theorem~\ref{thm:main} together with the fact that
 $\EUF(\Sigma)$ is trivially universal, convex, stably infinite,
 has a model completion~\cite{cade19}
 and is equality interpolating~\cite{ym},\cite{bgr-acmtocl}.
 \end{proof}

\section{The Non-Convex Case: a Counterexample}\label{sec:nonconvex}

In this section, we show  by giving a suitable counterexample that the convexity hypothesis cannot be dropped from Theorems~\ref{thm:main},~\ref{thm:necessary}. We make use of basic facts about ultrapowers (see~\cite{CK} for the essential information we need).
We take as $T_1$ integer difference logic \IDL, i.e. the theory of integer numbers under the unary operations of successor and predecessor, the constant 0 and the strict order  relation $<$. This is stably infinite,
universal and has quantifier elimination (thus it coincides with its own model completion). 
It is not convex, but it satisfies the equality interpolating condition, once the latter is suitably adjusted to non-convex theories, see~\cite{bgr-acmtocl}  for the related definition and all the above mentioned facts.

As $T_2$, we take $\EUF(\Sigma_{f})$, where $\Sigma_f$  has just one unary free function symbol $f$ (this $f$ is supposed not to belong to the signature of $T_1$).

\begin{proposition}
 Let $T_1, T_2$ be as above; the formula 
 \begin{equation}\label{eq:cex}
  \exists e\;(0<e \wedge e< x\wedge f(e)=0)
 \end{equation}
does not have a cover in $T_1\cup T_2$.
\end{proposition}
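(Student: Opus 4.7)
The plan is to prove non-existence by contradiction via a compactness argument. Suppose a quantifier-free $\psi(x)$ were a cover of~\eqref{eq:cex}; by the Cover-by-Extensions Lemma~\ref{lem:cover}, for every model $\cM$ of $T_1\cup T_2$ and every $a\in|\cM|$ one has $\cM\models\psi(a)$ iff $\cM$ embeds into some $T_1\cup T_2$-model $\cN$ containing an element $b$ with $0<b<a$ and $f(b)=0$ (the ``only if'' direction is (ii) of the lemma; the ``if'' direction follows from (i) and the fact that $\psi$ is quantifier-free and hence preserved downwards along embeddings).

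First I would treat $x$ as a fresh constant and consider the theory
\[
\Gamma\ =\ T_1\cup T_2\ \cup\ \{\,\mathsf{succ}^n(0)<x\ :\ n\in\mathbb{N}\,\}\ \cup\ \{\,f(\mathsf{succ}^n(0))\neq 0\ :\ n\in\mathbb{N}\,\}.
\]
In any model $\cM\models\Gamma$ the element $a$ interpreting $x$ is non-standard, and $f$ takes non-zero values on all standard non-negative integers. The key step is to argue that such a $\cM$ can always be extended to a $T_1\cup T_2$-model $\cN$ witnessing $\exists e(0<e<a\wedge f(e)=0)$: since $a$ is non-standard, the $\IDL$-copy of $\mathbb{Z}$ containing $a$ lies strictly above the standard copy in the lexicographic order of $\cM$, so there is an order gap into which a brand-new $\mathbb{Z}$-copy can be inserted. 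Picking an element $b$ of that fresh copy gives $0<b<a$, and one is free to define $f(b)=0$ (and arbitrarily on the other new elements) because $f$ is uninterpreted in $T_2$. Thus $\cN\models\exists e(0<e<a\wedge f(e)=0)$; by the cover property just recalled this forces $\cM\models\psi(a)$, so $\Gamma\models\psi(x)$.

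Next I would apply compactness: some finite subset of $\Gamma$ already entails $\psi(x)$, so there exists $N\in\mathbb{N}$ with
\[
T_1\cup T_2\ \cup\ \{\,\mathsf{succ}^n(0)<x,\ f(\mathsf{succ}^n(0))\neq 0\ :\ n\le N\,\}\ \models\ \psi(x).
\]
I would then evaluate this in the standard model $\cM'=\mathbb{Z}$ with $f$ interpreted as the constant function $1$ and $x$ interpreted as $N+2$. All the finitely many listed literals hold, hence $\cM'\models\psi(N+2)$. By Lemma~\ref{lem:cover}(ii) there should then exist an extension $\cN'\supseteq\cM'$ of $T_1\cup T_2$ with $\cN'\models\exists e(0<e<N+2\wedge f(e)=0)$. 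But the discreteness axioms of $\IDL$ force the elements strictly between $0$ and $N+2$ in any extension of $\mathbb{Z}$ to be exactly $1,\dots,N+1$; on each of these $f$ retains the value $1\neq 0$ inherited from $\cM'$, so no such witness $e$ can exist. This contradiction rules out the cover.

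The main obstacle is the model-theoretic construction of the extension $\cN$ in the second paragraph: one must justify rigorously that an entire new $\mathbb{Z}$-copy can be inserted in the order gap above the standard integers and below the predecessors of $a$ in such a way that the result is still a model of $\IDL$ (all the discreteness and successor/predecessor axioms must keep holding), and that the embedding $\cM\hookrightarrow\cN$ really is a $T_1$-embedding. Everything else is a routine compactness-plus-diagonalisation argument, exploiting precisely the non-convex disjunction $x\le 1\vee x\le 2\vee\cdots$ hidden in~\eqref{eq:cex}.
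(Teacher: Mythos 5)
Your proof is correct, but it takes a genuinely different route from the paper's. The paper works with a single countable model $\cM$ (the standard integers with $0$ not in the image of $f$), observes that $\cM\models\neg\phi(a_k)$ for every standard $k>0$ because the interval $(0,k)$ cannot be enlarged in any extension, and then uses a non-principal ultrapower together with \L{}o\'s's theorem to manufacture one non-standard element $a$ at which $\neg\phi(a)$ still holds; a second ultrapower then supplies fresh elements below $a$ on which $f$ can be reset to $0$, contradicting the residue property of the cover. You replace the ultrapower machinery by a compactness argument: you axiomatize ``$x$ is non-standard and $f\neq 0$ on all standard points'' by the infinite set $\Gamma$, show that every model of $\Gamma$ must satisfy $\psi(x)$ (your order-gap insertion of a fresh $\mathbb{Z}$-chain plays exactly the role of the paper's second ultrapower, and is legitimate since the universal axioms of $\IDL$ only constrain models to be linearly ordered stacks of $\mathbb{Z}$-chains, while $f$ is free), and then use compactness to cut $\Gamma$ down to a finite fragment that is already satisfied in the standard model at the standard point $N+2$ --- where condition (ii) of Lemma~\ref{lem:cover} visibly fails by discreteness. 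The two arguments exploit the same two facts (no extension can insert points below a standard number; above a non-standard number one can always insert fresh points and send them to $0$ under $f$), but yours trades ultraproducts for compactness and is arguably more elementary; the paper's is shorter once \L{}o\'s's theorem is taken off the shelf. The one step you flag as an obstacle --- that inserting a new $\mathbb{Z}$-copy yields a $T_1$-extension --- is indeed the point requiring care, but it is standard for this theory and no harder than the paper's own appeal to ``a further non-principal ultrapower''.
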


\begin{proof}
 Suppose that~\eqref{eq:cex} has a cover $\phi(x)$. This means (according to Cover-by-Extensions Lemma~\ref{lem:cover})  that for every model $\cM$ of $T_1\cup T_2$ and for every element $a\in \vert \cM\vert$ such that $\cM\models\phi(a)$, there is an extension $\cN$ of $\cM$ such that $\cN\models
 \exists e\;(0<e \wedge e< a\wedge f(e)=0)$.
 
 Consider the model $\cM$,
 so specified: the support of $\cM$ is the set of the integers, the symbols from the signature of $T_1$ are interpreted in the standard way and the symbol $f$ is interpreted so that 0 is not in the image of $f$. 
 Let $a_k$ be the number $k>0$ (it is an element from the support of $\cM$).
 Clearly it is not possible to 
 extend $\cM$ so that $\exists e\;(0<e \wedge e< a_k\wedge f(e)=0)$ becomes true: 
 indeed, we know that all the elements in the interval $(0,k)$ are definable as iterated successors of $0$ and, by using the axioms of \IDL, no element can be added between a number and its successor, hence this interval cannot be enlarged in a superstructure. We conclude that $\cM  \models \neg\phi(a_k)$ for every $k$.  
 
 Consider now an ultrapower $\Pi_D \cM$ of $\cM$ modulo a non-principal ultrafilter $D$ and let $a$ be the equivalence class of the tuple 
 $\langle a_k\rangle_{k\in \mathbb{N}}$; by the fundamental L{os} theorem~\cite{CK},  $\Pi_D \cM\models \neg\phi(a)$. 
We claim that it is possible to extend $\Pi_D \mathcal{M}$ to a superstructure $\mathcal N$ such that $\cN\models
 \exists e\;(0<e \wedge e< a\wedge f(e)=0)$:  this would entail, by definition of cover, that  $\Pi_D \mathcal{M}\models \phi(a)$, contradiction.
 We now show why the claim is true. Indeed, since  $\langle a_k\rangle_{k\in\mathbb{N}}$ has arbitrarily big numbers as its components, we have that, in $\Pi_D \cM$,  $a$ is bigger than all standard numbers. 
  %
  Thus, if we take a further non-principal ultrapower $\cN$ of  $\Pi_D \cM$, it becomes possible to change in it the evaluation of $f(b)$ for some $b<a$ and set it to $0$ (in fact, as it can be easily seen, 
  there are 
  elements  $b\in  \vert \cN\vert$ less than $a$ but not in the support of $\Pi_D \cM$). 
\end{proof}

The counterexample still applies  when replacing integer difference logic with linear integer arithmetics.

\vspace{-2mm}

\section{Tame Combinations}\label{sec:value}

So far, we only analyzed the mono-sorted case. However, many interesting examples arising in model-checking verification are multi-sorted: this is the case of array-based systems~\cite{lmcs} and in particular of the array-based system used in data-aware verification~\cite{CGGMR19},\cite{MSCS20}.
The above examples suggest restrictions on the theories to be combined other than convexity, in particular they suggest restrictions that make sense in a multi-sorted context.

Most definitions we gave in Section~\ref{sec:prelim} have straightforward natural extensions to the multi-sorted case (we leave the reader to formulate them). A little care is needed however for the disjoint signatures requirement.
Let $T_1, T_2$ be multisorted theories in the signatures $\Sigma_1, \Sigma_2$; the disjointness requirement for 
$\Sigma_1$ and $\Sigma_2$ can be formulated in this context by saying that  the only function or relation symbols in $\Sigma_1\cap \Sigma_2$ are the equality predicates over the common sorts in $\Sigma_1\cap \Sigma_2$. We want to strengthen this requirement: we say that the combination $T_1\cup T_2$ is \emph{tame} iff the sorts in $\Sigma_1\cap \Sigma_2$ \emph{can only be the codomain sort} (and not a domain sort) of a symbol from $\Sigma_1$ other than an equality predicate. In other word, if a relation or a function symbol has as among its domain sorts a sort from $\Sigma_1\cap \Sigma_2$, then this symbol is from $\Sigma_2$ (and not from $\Sigma_1$, unless it is the equality predicate).

Tame combinations arise in infinite-state model-checking (in fact, the definition is suggested by this application domain), where  signatures can be split into a  signature $\Sigma_2$ for 'data' and a signature $\Sigma_1$ for 'data containers', see~\cite{CGGMR19},\cite{MSCS20}. 

Notice that the notion of a tame combination is not symmetric in $T_1$ and $T_2$: to see this, notice that if the sorts of $\Sigma_1$ are included in the sorts of $\Sigma_2$, then $T_1$ must be a pure equality theory (but this  is not the case if we swap $T_1$ with $T_2$). The combination of $\IDL$ and $\EUF(\Sigma)$ used in  the 
counterexample of section~\ref{sec:nonconvex} is not tame: even if we formulate $\EUF(\Sigma)$ as a two-sorted theory, the unique sort of $\IDL$ must  be a sort of $\EUF(\Sigma)$ too, as witnessed by the impure atom $f(e)=0$ in the formula~\eqref{eq:cex}. Because of this, for the combination to be tame, $\IDL$ should play the role of $T_2$ (the arithmetic operation symbols are defined on a  shared sort); however, the unary function symbol $f\in \Sigma$ has a shared sort as domain sort, so the combination is not tame anyway.

In a tame combination, an atomic formula $A$ can only be of two kinds: (1) we say that $A$ is of the \emph{first kind} iff the sorts of its root predicate are from $\Sigma_1\setminus \Sigma_2$; (2) we say that $A$ 
is of the \emph{second kind} iff the sorts of its root predicate are from $\Sigma_2$.
We use the roman letters $e, x,\dots$ for variables ranging over sorts in $\Sigma_1\setminus \Sigma_2$ and the greek letters $\eta, \xi, \dots$ for variables ranging over sorts in $\Sigma_2$. Thus, if we want to display free variables,  atoms of the first kind can be represented as $A(e,x, \dots)$, whereas atoms
of the second kind can be represented as $A(\eta, \xi, \dots, t(e, x, \dots), \dots)$, where the $t$ are $\Sigma_1$-terms. 

Suppose not that $T_1\cup T_2$ is a tame combination and that $T_1, T_2$ are universal theories admitting  model completions $T_1^*, T_2^*$. We propose the following algorithm to compute the cover of a primitive formula; the latter  must be of the kind 
\begin{equation}\label{eq:tamecover}
 \exists \ue\;\exists \ueta (\phi(\ue, \ux) \wedge \psi(\ueta, \uxi, \ut(\ue,\ux)))
\end{equation}
where $\phi$ is a $\Sigma_1$-conjunction of literals, $\psi$ is a conjunction of $\Sigma_2$-literals and the $\ut$ are $\Sigma_1$-terms.
The algorithm has three steps.
\begin{description}
 \item[{\rm (i)}] We flatten~\eqref{eq:tamecover} and get
 \begin{equation}\label{eq:tamecover1}
 \exists \ue\;\exists \ueta\; \exists \ueta'\; (\phi(\ue, \ux) \wedge \ueta'= \ut(\ue, \ux)\wedge\psi(\ueta, \uxi, \ueta')))
\end{equation}
where the $\ueta'$ are fresh variables  abstracting out the $\ut$ and $\ueta'= \ut(\ue, \ux)$  is a componentwise conjunction of equalities.
\item[{\rm (ii)}] We apply the cover algorithm of $T_1$ to the formula
\begin{equation}\label{eq:tamecover1+}
 \exists \ue\; (\phi(\ue, \ux) \wedge \ueta'= \ut(\ue, \ux))~~;
\end{equation}
this gives as a result a formula $\tilde\phi(\ux, \ueta')$ that we put in DNF. A disjunct of $\phi$ will have the form 
$\phi_1(\ux)\wedge \phi_2(\ueta', \ut'(\ux))$ after separation of the literals of the first and of the second kind. 
We pick such a disjunct 
$\phi_1(\ux)\wedge \phi_2(\ueta', \ut'(\ux))$ of the DNF of $\tilde\phi(\ux, \ueta')$ and update our current primitive formula to
 \begin{equation}\label{eq:tamecover2}
 \exists \uxi'\;( \uxi'= \ut'(\ux)\wedge (\exists \ueta\; \exists \ueta'\; (\phi_1(\ux)\wedge \phi_2(\ueta', \uxi') \wedge \psi(\ueta, \uxi, \ueta'))))
\end{equation}
(this step is nondeterministic: in the end we shall output the disjunction of all possible outcomes). Here again the $\uxi'$ are fresh variables abstracting out the terms $\ut'$. Notice that, according to the definition of a tame combination, $\phi_2(\ueta', \uxi')$ must be a conjunction of equalities and disequalities between variable terms, because it is a $\Sigma_1$-formula (it comes from a $T_1$-cover computation) and $\ueta', \uxi'$ are variables of $\Sigma_2$-sorts.
\item[{\rm (iii)}] We apply the cover algorithm of $T_2$ to the formula
 \begin{equation}\label{eq:tamecover2+}
 \exists \ueta\; \exists \ueta'\; (\phi_2(\ueta', \uxi') \wedge \psi(\ueta, \uxi, \ueta'))
\end{equation}
this gives as a result a formula $\psi_1(\uxi, \uxi')$. We update our current formula to 
$$\exists \uxi'\;( \uxi'= \ut'(\ux)\wedge \phi_1(\ux)\wedge \psi_1( \uxi, \uxi' ))$$
and finally to the equivalent  quantifier-free formula 
\begin{equation}\label{eq:tamecover2++}
 \phi_1(\ux)\wedge \psi_1( \uxi, \ut'(\ux) )~~.
\end{equation}
\end{description}

\noindent
We now show that the above algorithm is correct under very mild hypotheses. We need some technical facts about stably infinite theories in a multi-sorted context. We say that a multi-sorted theory $T$ is \emph{stably infinite with respect to a set of sorts $\cS$ from its signature} iff every $T$-satisfiable constraint is satisfiable in a model $\cM$ where, for every $S\in \cS$, the set  $S^\cM$ (namely the interpretation of the sort $S$ in $\cM$) is  
infinite. The next Lemma is a light generalization of Lemma~\ref{lem:si} and is proved in the same way (the proof is reported in Appendix~\ref{app:lemma6}):

\begin{lemma}\label{lemma6}
 Let $T$ be stably infinite with respect to a subset $\cS$ of the set of sorts of the signature of $T$. Let $\cM$ be a model of $T$ and  let, for every $S\in \cS$,  $X_S$ be a 
 superset of  $S^\cM$.
 Then there is an extension $\cN$ of $\cM$  such that for all $S\in \cS$ we have $S^\cN\supseteq X_S$. 
\end{lemma}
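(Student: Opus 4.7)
The plan is a standard Robinson diagram plus compactness argument, generalized to handle one cardinality constraint per sort in $\cS$. First I would expand the signature $\Sigma^{|\cM|}$ to a signature $\Sigma^+$ by introducing, for every $S\in\cS$ and every $x\in X_S\setminus S^\cM$, a fresh constant $\bar x$ of sort $S$ (elements of $S^\cM$ already have names coming from $\Sigma^{|\cM|}$). Then I would form the $\Sigma^+$-theory
\[
T' \;:=\; T \,\cup\, \Delta_\Sigma(\cM) \,\cup\, \{\, \bar x \neq \bar y \mid S\in\cS,\ x,y\in X_S,\ x\neq y\,\}.
\]
If $T'$ has a model $\cN^+$, then by the Robinson Diagram Lemma the reduct $\cN$ of $\cN^+$ to $\Sigma$ is a model of $T$ extending $\cM$, and for every $S\in\cS$ the interpretations of the distinct constants $\bar x$ of sort $S$ witness that $S^\cN\supseteq X_S$. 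So everything reduces to showing that $T'$ is consistent.

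Consistency of $T'$ is established by compactness. Any finite fragment $T'_0\subseteq T'$ involves only finitely many elements $a_1,\dots,a_m\in|\cM|$ and finitely many fresh constants, say $n_S$ of them for each $S\in\cS$ (with $n_S=0$ for all but finitely many $S$). It suffices to find a model of $T$ containing $\cM$ as a substructure in which, for each such $S$, the set $S^\cN\setminus\{a^\cM_j\mid a_j \text{ of sort } S\}$ has at least $n_S$ elements, so that the fresh constants can be interpreted as distinct elements meeting the disequalities in $T'_0$. By stable infiniteness of $T$ with respect to $\cS$ (applied to the true constraint, or equivalently to the constraint that forces $m$ distinct named elements), there exists a model of $T$ in which every sort in $\cS$ is infinite; a straightforward diagram-plus-compactness argument, exactly as in Lemma~\ref{lem:si} but run sort-by-sort, then shows that $\cM$ itself embeds into such a model, yielding the required witness for $T'_0$.

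The main obstacle, as in the mono-sorted Lemma~\ref{lem:si}, is precisely this last step: stable infiniteness is phrased as existence of \emph{some} $T$-model with infinite designated sorts, whereas we need to extend the \emph{given} $\cM$. Reducing one to the other requires an auxiliary compactness argument over $T\cup\Delta_\Sigma(\cM)$ augmented with fresh constants of each sort $S\in\cS$ and pairwise disequalities between them; finite satisfiability of this auxiliary theory is immediate from ordinary stable infiniteness of $T$ w.r.t.\ $\cS$. Once this intermediate extension is in place, the argument outlined above goes through, completing the proof.
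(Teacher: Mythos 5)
Your proof is correct and follows essentially the same route as the paper's: name the missing elements of each $X_S$ by fresh constants, assert $T\cup\Delta_\Sigma(\cM)$ together with all pairwise disequalities, reduce to finite fragments by compactness, and read off the extension via the Robinson Diagram Lemma. The one structural difference is that the paper disposes of a finite fragment $\Delta_0\cup D_0$ in a single step --- $\Delta_0$ is a constraint satisfiable in $\cM$, so stable infiniteness w.r.t.\ $\cS$ applied \emph{to $\Delta_0$ itself} (not to the true constraint) yields a $T$-model of $\Delta_0$ with all designated sorts infinite, in which the finitely many fresh constants of $D_0$ can be interpreted --- so the nested ``diagram-plus-compactness'' detour that you flag as the main obstacle (first extending all of $\cM$ to a model with infinite designated sorts) is sound but not actually needed.
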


\begin{lemma}\label{lem:stable}
 Let $T_1, T_2$ be universal signature disjoint theories which are stably infinite with respect to the set of shared sorts 
(we let $\Sigma_1$ be the signature of $T_1$ and $\Sigma_2$ be the signature of $T_2$).
 Let $\cM_0$ be  model of $T_1\cup T_2$ and let  $\cM_1$ be a model of $T_1$ extending the $\Sigma_1$-reduct of $\cM_0$. Then there exists a
 model $\cN$ of $T_1\cup T_2$, extending $\cM_0$ as a $\Sigma_1\cup \Sigma_2$-structure and whose $\Sigma_1$-reduct  extends $\cM_1$.
\end{lemma}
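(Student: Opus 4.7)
The plan is a standard back-and-forth chain construction, alternately invoking Lemma~\ref{lemma6} on $T_1$ and on $T_2$ to grow a pair of chains whose shared-sort supports interleave and hence coincide at the limit. The two limits can then be glued into a single $\Sigma_1\cup\Sigma_2$-model, since by signature disjointness the only shared symbols are the equalities on the common sorts.

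Set $\cA_0:=\cM_1$ and $\cB_0:=\cM_0|_{\Sigma_2}$. Since $\cM_1$ extends $\cM_0|_{\Sigma_1}$, on every shared sort $S$ we have $S^{\cB_0}=S^{\cM_0}\subseteq S^{\cM_1}=S^{\cA_0}$. Assume inductively that $T_1$-models $\cA_0\subseteq\cdots\subseteq\cA_n$ and $T_2$-models $\cB_0\subseteq\cdots\subseteq\cB_n$ have been built so that on every shared sort $S$
\[
S^{\cB_0}\subseteq S^{\cA_0}\subseteq S^{\cB_1}\subseteq S^{\cA_1}\subseteq\cdots\subseteq S^{\cB_n}\subseteq S^{\cA_n}.
\]
Apply Lemma~\ref{lemma6} to $\cB_n$ with $X_S:=S^{\cA_n}$ (a superset of $S^{\cB_n}$ by the inductive inclusion) to obtain a $T_2$-extension $\cB_{n+1}\supseteq\cB_n$ with $S^{\cB_{n+1}}\supseteq S^{\cA_n}$; symmetrically, apply it to $\cA_n$ with $X_S:=S^{\cB_{n+1}}$ to obtain a $T_1$-extension $\cA_{n+1}\supseteq\cA_n$ with $S^{\cA_{n+1}}\supseteq S^{\cB_{n+1}}$. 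The inductive inclusions are thus preserved.

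Let $\cA_\omega:=\bigcup_n\cA_n$ and $\cB_\omega:=\bigcup_n\cB_n$. Because $T_1,T_2$ are universal, unions of chains of their models are still models, so $\cA_\omega\models T_1$ and $\cB_\omega\models T_2$. By the interleaving, $S^{\cA_\omega}=S^{\cB_\omega}$ for every shared sort $S$. Since $\Sigma_1\cap\Sigma_2$ consists only of the equalities on the shared sorts, the two reducts can be glued into a single $\Sigma_1\cup\Sigma_2$-structure $\cN$ whose $\Sigma_1$-reduct is $\cA_\omega$ and whose $\Sigma_2$-reduct is $\cB_\omega$. Then $\cN\models T_1\cup T_2$, its $\Sigma_1$-reduct extends $\cM_1=\cA_0$, and it extends $\cM_0$ as a $\Sigma_1\cup\Sigma_2$-structure since $\cM_0|_{\Sigma_1}\subseteq\cM_1\subseteq\cA_\omega$ and $\cM_0|_{\Sigma_2}=\cB_0\subseteq\cB_\omega$.

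The only delicate bookkeeping is the interleaving of inclusions, which has to be maintained so that the hypothesis of Lemma~\ref{lemma6} is met at every step and the two chains of shared-sort supports agree at the limit; once this is in place, both the gluing and the final verifications follow automatically from the disjointness of the two signatures on non-sort symbols.
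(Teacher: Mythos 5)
Your proposal is correct and follows essentially the same route as the paper: an alternating chain of $T_1$- and $T_2$-models built by repeated application of Lemma~\ref{lemma6} so that the shared-sort supports interleave, followed by taking the union and gluing the two reducts, which is legitimate since the signatures share only equality on the common sorts. Your write-up is in fact more explicit than the paper's two-line sketch about the inductive inclusions and the use of universality to pass to the union of the chain.
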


\begin{proof} 
Using the previous lemma,
 build a chain of models $\cM_0\subseteq \cM_1\subseteq \cM_2\subseteq\cdots$ such that for all $i$,  $\cM_{2i}$ is a model of $T_2$,  $\cM_{2i+1}$ is a model of $T_1$ and $\cM_{2i+2}$ is a $\Sigma_2$-extension of $\cM_{2i}$, whereas $\cM_{2i+3}$ is a $\Sigma_2$-extension of $\cM_{2i+1}$. The union over this chain of models will be the desired $\cN$.
\end{proof}

We are now ready for the main result of this section:

\begin{theorem}\label{thm;main1}
 Let $T_1\cup T_2$ be a tame combination of two universal theories admitting a model completion. If $T_1,T_2$ are also stably infinite
 with repect to their shared sorts, then $T_1\cup T_2$ has a model completion. Covers in $T_1\cup T_2$ can be computed as shown in the above three-steps algorithm.
\end{theorem}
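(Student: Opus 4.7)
The plan is to verify the two clauses of the Cover-by-Extensions Lemma~\ref{lem:cover} for the output~\eqref{eq:tamecover2++} against the primitive input~\eqref{eq:tamecover}. Once the algorithm is shown to produce $T_1\cup T_2$-covers, the existence of the model completion of $T_1\cup T_2$ will follow from Theorem~\ref{prop:qe}, since $T_1\cup T_2$ is universal.

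The entailment direction (output is $T_1\cup T_2$-implied by the input) should be routine: flattening in step (i) is an equivalence; step (ii) replaces the $\Sigma_1$-existential by the disjuncts of its $T_1$-cover $\tilde\phi$, each of the form $\phi_1(\ux)\wedge\phi_2(\ueta',\ut'(\ux))$; step (iii) replaces the remaining existential by the $T_2$-cover $\psi_1(\uxi,\uxi')$; back-substituting $\uxi'=\ut'(\ux)$ yields~\eqref{eq:tamecover2++}. Disjoining over the nondeterministic disjuncts chosen in step (ii) gives the full implication.

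For the extension direction, I would fix a $T_1\cup T_2$-model $\cM$ and tuples $\ua,\ub$ from $|\cM|$ with $\cM\models \phi_1(\ua)\wedge\psi_1(\ub,\ut'(\ua))$ corresponding to some output disjunct, and write $\uc$ for the $\cM$-interpretation of $\ut'(\ua)$. The construction proceeds in four moves. First, apply Lemma~\ref{lem:cover} to the $T_2$-cover $\psi_1$: the $\Sigma_2$-reduct of $\cM$ extends to a $T_2$-model satisfying $\phi_2(\ud',\uc)\wedge\psi(\ud,\ub,\ud')$ for fresh witnesses $\ud,\ud'$. Second, use stable infiniteness on shared sorts together with Lemma~\ref{lem:stable} to lift this $\Sigma_2$-extension to a $T_1\cup T_2$-model $\cM_1\supseteq\cM$ containing $\ud,\ud'$; since quantifier-free truth is preserved, $\cM_1\models\phi_1(\ua)\wedge\phi_2(\ud',\uc)$, so the chosen disjunct of $\tilde\phi(\ua,\ud')$ holds in the $\Sigma_1$-reduct. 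Third, apply Lemma~\ref{lem:cover} to the $T_1$-cover $\tilde\phi$: the $\Sigma_1$-reduct of $\cM_1$ extends to a $T_1$-model realizing $\phi(\uv,\ua)\wedge\ud'=\ut(\uv,\ua)$ for some witness $\uv$. Fourth, apply Lemma~\ref{lem:stable} once more to produce a $T_1\cup T_2$-model $\cN\supseteq\cM_1$ whose $\Sigma_1$-reduct extends the $T_1$-model obtained in the third move; then $\cN\models\phi(\uv,\ua)\wedge\psi(\ud,\ub,\ut(\uv,\ua))$, witnessing~\eqref{eq:tamecover} at $(\ua,\ub)$.

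The main obstacle will be the interface between the first and second moves: the literals of $\phi_2(\ud',\uc)$ are extracted as ``second-kind'' literals of a $\Sigma_1$-cover, yet must be compatible with the $\Sigma_2$-witnesses $\ud,\ud'$ produced on the $T_2$-side. Tameness is what resolves this: shared sorts cannot serve as domain sorts of any non-equality $\Sigma_1$-symbol, so $\phi_2$ can only be a conjunction of (dis)equalities among shared-sort variables and $\Sigma_1$-terms of shared sort, and such atoms have identical meaning in $\Sigma_1$ and in $\Sigma_2$. The two cover computations therefore exchange only signature-neutral (dis)equality information on shared-sort elements, and the nondeterministic disjunction over disjuncts of $\tilde\phi$ in step (ii) exhausts the possible exchanges. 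This is the tame-setting analogue of the role played by Beth definability and equality interpolating terms in the convex Proposition~\ref{prop:terminal}.
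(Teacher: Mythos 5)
Your proposal is correct and follows essentially the same route as the paper's proof: condition (i) of Lemma~\ref{lem:cover} is dispatched as immediate, and condition (ii) is established by exactly the same four-move chain --- extend the $\Sigma_2$-reduct via the $T_2$-cover, lift to a $T_1\cup T_2$-model by Lemma~\ref{lem:stable}, extend the $\Sigma_1$-reduct via the $T_1$-cover (using that the chosen disjunct $\phi_1\wedge\phi_2$ implies $\tilde\phi$), and lift once more by Lemma~\ref{lem:stable}. Your closing observation that tameness forces $\phi_2$ to be a signature-neutral conjunction of (dis)equalities on shared-sort variables is precisely the remark the paper makes in step (ii) of the algorithm.
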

\begin{proof}
Since condition (i) of Lemma~\ref{lem:cover} is trivially true, we need only to check condition (ii), namely that given a $T_1\cup T_2$-model $\cM$ and elements $\ua, \ub$ from its support such that $\cM\models \phi_1(\ua)\wedge \psi_1( \ub, \ut'(\ua) )$ as in~\eqref{eq:tamecover2++}, then there is an extension $\cN$ of $\cM$ such that~\eqref{eq:tamecover} is true in $\cN$ when evaluating $\ux$ over $\ua$ and $\uxi$ over $\ub$.

If we let $\ub'$ be the tuple such that $\cM\models \ub'= \ut'(\ua)$, then we have $\cM\models \ub'= \ut'(\ua)\wedge \phi_1(\ua)\wedge \psi_1( \ub, \ub' )$. Since $\psi_1(\uxi, \uxi')$ is the $T_2$-cover of~\eqref{eq:tamecover2+}, the $\Sigma_2$-reduct of $\cM$ embeds into a $T_2$-model
where~\eqref{eq:tamecover2+} is true under the evaluation of the $\uxi$  as the $\ub$. By Lemma~\ref{lem:stable},
this model can be embedded into a $T_1\cup T_2$-model $\cM'$ in such a way that $\cM'$ is an extension of $\cM$ and that $\cM'\models  \ub'= \ut'(\ua)\wedge \phi_1(\ua)\wedge \phi_2(\uc', \ub') \wedge \psi(\uc, \ub, \uc')$ for some $\uc, \uc'$.
Since $\phi_1(\ux)\wedge \phi_2(\ueta', \ut'(\ux))$ implies the $T_1$-cover of~\eqref{eq:tamecover1+} and 
$\cM'\models \phi_1(\ua)\wedge \phi_2(\uc', \ut(\ua))$, then the $\Sigma_1$-reduct of $\cM'$ can be expanded to a $T_1$-model where~\eqref{eq:tamecover1+} is true when evaluating the $\ux, \ueta'$ to the $\ua, \uc'$. Again by Lemma~\ref{lem:stable}, this model can be expanded to a $T_1\cup T_2$-model $\cN$ such that $\cN$ is an extension of $\cM'$ (hence also of $\cM$) and 
$\cN\models \phi(\ua', \ua) \wedge \uc'= \ut(\ua', \ua)\wedge \psi(\uc, \ub, \uc') $, that is 
$\cN\models \phi(\ua', \ua) \wedge \psi(\uc, \ub, \ut(\ua', \ua)) $. This means that $\cN\models 
\exists \ue\;\exists \ueta (\phi(\ue, \ua) \wedge \psi(\ueta, \ub,\ut(\ue,\ua))) $, as desired.
\end{proof}

\section{Conclusions and Future Work}\label{sec:conclusions}


In this paper we showed that covers (aka uniform interpolants) exist in the combination of two convex universal  theories over disjoint signatures in case they exist in the component theories and in case the component theories also satisfy the equality interpolating condition - this further condition 
is nevertheless needed in order to transfer the existence of (ordinary) quantifier-free interpolants. 
In order to prove that, Beth definability property for primitive fragments turned out to be the crucial ingredient to extensively employ. In case convexity fails, we showed by a counterexample that covers might not exist anymore in the combined theory.  
%
The last result raises the following research problem.
Even if in general covers do not exist for combination of non-convex theories, it would be interesting to see under what conditions one can decide whether a given cover exists and, in the affirmative case, to compute it.

Applications suggested a different line of investigations, i.e., what we called `tame combinations'. 
In database-driven verification~\cite{CGGMR19},\cite{BPM19},\cite{MSCS20} one uses tame combinations $T_1\cup T_2$, where $T_1$ is a multi-sorted version of $\EUF(\Sigma)$ in a signature $\Sigma$ containing only unary function symbols and relations of any arity. In this context, quantifier elimination in $T_1^*$ for primitive formulae is quadratic in complexity. Model-checkers like \textsc{MCMT} represent sets of reachable states by using conjunctions of literals and primitive formulae to which quantifier elimination should be applied arise from preimage computations. Now, in this context, if all relation symbols are at most binary, then quantifier elimination in $T_1^*$ produces conjunctions of literals out of primitive formulae, thus step (ii) in the above algorithm becomes deterministic and the only reason why the  algorithm may become expensive (i.e. non polynomial) lies in the final quantifier elimination step for $T_2^*$. The latter might be extremely expensive if substantial arithmetic is involved, but it might still be efficiently handled in practical cases where only very limited arithmetic (e.g. difference bound constraints like $x-y\leq n$ or $x\leq n$, where $n$ is a constant) is involved.  This is why we feel that our algorithm for covers in tame combinations can be really useful in the applications.
This is confirmed by our first experiments with version 2.9 of \textsc{MCMT}, where such algorithm has been implemented.

A final future research line could consider cover transfer properties to non-disjoint signatures combinations, analogously to similar results obtained in~\cite{GG18} for the transfer of quantifier-free interpolation. Indeed, the main challenge here seems to consist in finding sufficient condition for existence of covers in combination of non-convex theories: in fact, we know from Section~\ref{sec:nonconvex} that the non-convex version of the equality interpolation property~\cite{bgr-acmtocl} is not enough for this purpose.

\bibliographystyle{abbrv}
\bibliography{mcmt}

\newpage

\appendix

\section{Appendix}\label{sec:approofs}

In this Appendix we report the proof of the Cover-by-Extensions Lemma~\ref{lem:cover}, 
 of the technical Lemmas~\ref{lem:si} and~\ref{lem1} and fill the missing details of the proof of Lemma~\ref{lem:toterminal}.

\subsection{Proof of Lemma~\ref{lem:cover}}

The Cover-by-Extension Lemma is not an original result of this paper: the proof is reported here from~\cite{cade19} just for the sake of completeness (the  Lemma is crucial in the present paper too).

\vskip 2mm\noindent
\textbf{Lemma \ref{lem:cover} [Cover-by-Extensions]} \emph{A formula $\psi(\uy)$ is a $T$-cover of $\exists \ue\, \phi(\ue, \uy)$ iff 
it satisfies the following two conditions:
\begin{inparaenum}[(i)]
\item $T\models  \forall \uy\,( \exists \ue\,\phi(\ue, \uy) \to \psi(\uy))$;
\item for every model $\cM$ of $T$, for every tuple of  elements $\ua$ from the support of $\cM$ such that $\cM\models \psi(\ua)$ it is possible to find
  another model $\cN$ of $T$ such that $\cM$ embeds into $\cN$ and $\cN\models \exists \ue \,\phi(\ue, \ua)$.
\end{inparaenum}
}
\vskip 1mm
\begin{proof}
\textit{Suppose that $\psi(\uy)$ satisfies conditions (i) and (ii) above}. 
Condition (i) says that $\psi(\uy)\in Res(\exists \ue\, \phi)$, so $\psi$ is a residue.
 In order to show that $\psi$ is also a cover, we have to prove that $T\models \forall \uy,\uz (\psi(\uy)\to \theta(\uy,\uz))$, for every $\theta(\uy,\uz)$ that is a residue for $\exists \ue\, \phi(\ue, \uy)$.
 Given a model $\cM$ of $T$, take a pair of tuples $\ua, \ub$ of elements from $|\cM|$ and suppose that $\cM\models \psi(\ua)$. By condition (ii), there is a model $\cN$ of $T$ such that $\cM$ embeds into $\cN$ and $\cN\models \exists \ue \phi(\ue, \ua)$. Using the definition of $Res(\exists \ue\, \phi)$, we have $\cN\models \theta(\ua,\ub)$, since $\theta(\uy,\uz)\in Res(\exists \ux\, \phi)$. Since $\cM$ is a substructure of $\cN$ and $\theta$ is quantifier-free, $\cM\models \theta(\ua,\ub)$ as well, as required.

\textit{Suppose that $\psi(\uy)$ is a cover}. The definition of residue implies condition (i). To show condition (ii) we have to prove that, given a model $\cM$ of $T$, for every tuple $\ua$ of elements from $|\cM|$, if $\cM\models \psi(\ua)$, then there exists a model $\cN$ of $T$ such that $\cM$ embeds into $\cN$ and $\cN\models \exists \ue \phi(\ue, \ua)$. By reduction to absurdity, suppose that this is not the case: this is equivalent (by using Robinson Diagram Lemma) to the fact that $\Delta(\cM)\cup \{ \phi(\ue, \ua) \}$ is a $T$-inconsistent $\Sigma^{|\cM|\cup\{\ue\}}$-theory. By compactness, there is a finite number of literals $\ell_1(\ua,\ub),...,\ell_m(\ua,\ub)$ (for some tuple $\ub$ of elements from $|\cM|$) such that $\cM\models \ell_i$ (for all $i=1,\dots,m$) and  $T\models\phi(\ue, \ua)\to \neg (\ell_1(\ua,\ub)\land\cdots\land \ell_m(\ua, \ub))$, which means that 
        $T\models \phi(\ue, \uy)\to (\neg \ell_1(\uy,\uz)\lor \cdots \lor \neg \ell_m(\uy,\uz))$, i.e. that 
        $T\models \exists\ue\,\phi(\ue, \uy)\to (\neg \ell_1(\uy,\uz)\lor \dots \lor \neg \ell_m(\uy,\uz))$. By definition of residue, clearly $(\neg \ell_1(\uy,\uz)\lor \dots \lor \neg \ell_m(\uy,\uz)) \in Res(\exists \ux\, \phi)$; then, since $\psi(\uy)$ is a cover, $T\models \psi(\uy)\to (\neg \ell_1(\uy,\uz)\lor \dots \lor \neg \ell_m(\uy,\uz))$, which implies that $\cM \models \neg \ell_j(\ua,\ub)$ for some $j=1,\dots,m$, which is a contradiction. Thus, $\psi(\uy)$ satisfies conditions (ii) too. 
\end{proof}

\subsection{Proof of Lemma~\ref{lem:si}}

\vskip 2mm\noindent
\textbf{Lemma \ref{lem:si}} \emph{
 If $T$ is stably infinite, then every finite or countable model $\cM$ of $T$ can be embedded in a model $\cN$ of $T$ such
 that  $\vert \cN\vert \setminus \vert \cM\vert$ is countable.
}

\begin{proof}
 Consider $T\cup\Delta(\cM)\cup \{c_i\neq a\, \mid\, a\in \vert \cM\vert\}_i\cup \{c_i\neq c_j\}_{i\neq j}$, where $\{c_i\}_i$ is a countable set of fresh constants: by the Diagram Lemma and the downward L\"owenheim-Skolem theorem~\cite{CK}, it is sufficient to show that this set is consistent. Suppose not; then by compactness $T\cup\Delta_0\cup \Delta_1\cup\{c_i\neq c_j\}_{i\neq j}$ is not satisfiable, for a finite subset $\Delta_0$ of $\Delta(\cM)$ and a finite subset $\Delta_1$ of 
 $ \{c_i\neq a\, \mid\, a\in \vert \cM\vert\}_i$. However, this is a contradiction because by stable infiniteness $\Delta_0$ (being satisfiable in $\cM$) is satisfiable in an infinite model of $T$. 
\end{proof}

\subsection{Proof of Lemma~\ref{lem1}}

In order to prove Lemma~\ref{lem1}, we need further background from~\cite{bgr-acmtocl} concerning amalgamation and strong amalgamation. 

\begin{definition}
  \label{def:amalgamation}
  A universal theory $T$ has the \emph{amalgamation property} iff whenever
  we are given models $\cM_1$ and $\cM_2$ of $T$ and a common
  substructure $\cM_0$ of them, there exists a further model $\cM$ of
  $T$ endowed with embeddings $\mu_1:\cM_1 \longrightarrow \cM$ and
  $\mu_2:\cM_2 \longrightarrow \cM$ whose restrictions to $|\cM_0|$
  coincide.
    %

  A universal theory $T$ has the \emph{strong amalgamation property} if the
  above embeddings $\mu_1, \mu_2$ and the above model $\cM$ can be chosen so as to satisfy the following additional
  condition: if for some $m_1, m_2$ we have $\mu_1(m_1)=\mu_2(m_2)$,
  then there exists an element $a$ in $|\cM_0|$ such that $m_1=a=m_2$.
\end{definition}

Amalgamation and strong amalgamation are strictly related to quantifier-free interpolation and to combined quantifier-free interpolation, as the result below show:


\begin{theorem}\cite{bgr-acmtocl}
  \label{thm:strong_amalgamation_syntactic} The following two conditions are equivalent for a 
   convex universal theory $T$: (i) $T$ is equality interpolating and has quan\-ti\-fier-free interpolation; (ii) $T$
  has the strong amalgamation property.
\end{theorem}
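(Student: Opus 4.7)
The plan is to establish each direction by a compactness-plus-amalgamation argument, using the Robinson Diagram Lemma throughout. For \emph{(ii) $\Rightarrow$ (i)}, ordinary quantifier-free interpolation follows from plain amalgamation in the standard way: assuming $T \models \phi(\ue, \uy) \to \psi(\uy, \uz)$ admits no quantifier-free interpolant $\theta(\uy)$, compactness produces $\cM_1 \models T \cup \{\exists\ue\,\phi(\ue, \ua)\}$ and $\cM_2 \models T \cup \{\neg\psi(\ua, \uc)\}$ sharing exactly the substructure $\cM_0$ generated by $\ua$; amalgamating over $\cM_0$ yields a single $T$-model refuting the entailment. Next I would establish equality interpolating from \emph{strong} amalgamation. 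Suppose $T \models \delta_1(\ux, \uz_1, y_1) \wedge \delta_2(\ux, \uz_2, y_2) \to y_1 = y_2$ while no term fulfils~\eqref{eq:ym_cons}. A compactness plus convexity argument (detailed below) produces a $T$-model $\cN$ realising $\ua, \uc_1, \uc_2, b_1, b_2$ with $\cN \models \delta_1 \wedge \delta_2$ and $b_1$ outside the substructure $\cM_0$ generated by $\ua$. Taking two copies of $\cN$ sharing only $\cM_0$ and applying strong amalgamation gives a common $T$-extension where the two images of $b_1$ remain distinct, contradicting the hypothesis $y_1 = y_2$.

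For \emph{(i) $\Rightarrow$ (ii)}, given models $\cM_1, \cM_2$ of $T$ with common substructure $\cM_0$, I would consider
\[
\Gamma := T \cup \Delta(\cM_1) \cup \Delta(\cM_2) \cup \{m_1 \neq m_2 \mid m_1 \in |\cM_1|\setminus|\cM_0|,\; m_2 \in |\cM_2|\setminus|\cM_0|\}.
\]
By the Diagram Lemma, any model of $\Gamma$ is a strong amalgam. If $\Gamma$ were inconsistent, compactness would deliver a single pair $b_1 \in |\cM_1|\setminus|\cM_0|$, $b_2 \in |\cM_2|\setminus|\cM_0|$ together with finite diagram conjunctions $\delta_1(\ua, \uc_1, b_1)$, $\delta_2(\ua, \uc_2, b_2)$ such that
\[
T \models \delta_1 \wedge \delta_2 \to \bigvee\nolimits_i b_1^{(i)} = b_2^{(i)},
\]
where $\ua$ enumerates the elements of $\cM_0$ used. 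After replacing named elements by fresh variables, convexity collapses the disjunction to a single equality $y_1 = y_2$, and the equality interpolating hypothesis supplies a term $t(\ux)$ with $T \models \delta_1 \wedge \delta_2 \to y_1 = t(\ux)$. Evaluating in $\cM_1$ yields $b_1 = t(\ua)^{\cM_1}$, but $t(\ua)^{\cM_1} \in |\cM_0|$ since $\cM_0$ is a substructure closed under terms over $\ua$, contradicting $b_1 \in |\cM_1|\setminus|\cM_0|$.

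The main technical obstacle I expect is the compactness step in \emph{(ii) $\Rightarrow$ (i)} used to produce $\cN$: one must justify that
\[
T \cup \{\delta_1(\ua, \uc_1, b_1),\, \delta_2(\ua, \uc_2, b_2)\} \cup \{b_1 \neq t(\ua) \mid t \text{ a $\Sigma$-term}\}
\]
is consistent under the failure of~\eqref{eq:ym_cons}. If this set were inconsistent, compactness would yield $T \models \delta_1 \wedge \delta_2 \to \bigvee_{i=1}^n y_1 = t_i(\ux)$; convexity then collapses the disjunction to a single $y_1 = t_{i_0}(\ux)$, providing exactly the term whose nonexistence we assumed. A subtle bookkeeping point common to both directions is the clean separation of the fresh constants naming $|\cM_1|\setminus|\cM_0|$ from those naming $|\cM_2|\setminus|\cM_0|$, so that the extracted $T$-entailment has the right disjoint-variable structure $\delta_1(\ux,\uz_1,y_1) \wedge \delta_2(\ux,\uz_2,y_2)$ required by Definition~\ref{def:YM}.
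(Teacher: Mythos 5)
Your overall architecture for \emph{(i) $\Rightarrow$ (ii)} matches the paper's (the paper only proves this direction, citing \cite{bgr-acmtocl} for the converse): negate strong amalgamation, apply the Robinson Diagram Lemma and compactness, collapse the disjunction of equalities by convexity, and invoke equality interpolating to obtain $t(\ux)$. But the final step contains a genuine gap. From $T \models \delta_1(\ua,\uc_1,b_1) \wedge \delta_2(\ua,\uc_2,b_2) \to b_1 = t(\ua)$ you cannot conclude anything by ``evaluating in $\cM_1$'': the antecedent $\delta_2(\ua,\uc_2,b_2)$ is a statement about elements $\uc_2, b_2$ of $\cM_2$ that do not live in $\cM_1$, and no single model in sight satisfies $\delta_1 \wedge \delta_2$ --- if one did, you would already have your amalgam. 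The tell is that your argument for this direction never uses the quantifier-free interpolation half of hypothesis (i). The paper's proof needs it exactly here: rewrite the entailment as $T \models \delta_1 \wedge y_1 \neq t(\ux) \to \neg\delta_2$, extract a quantifier-free interpolant $\theta(\ua)$ over the shared names, observe that $\cM_1 \models \delta_1 \wedge b_1 \neq t(\ua)$ (since $t(\ua)^{\cM_1} \in |\cM_0|$ while $b_1 \notin |\cM_0|$), hence $\cM_1 \models \theta(\ua)$; since $\theta$ is quantifier-free over constants of $\cM_0$, its truth transfers to $\cM_0$ and then up to $\cM_2$, yielding $\cM_2 \models \neg\delta_2$, the desired contradiction. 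You also silently assume the disjunction produced by compactness is nonempty; convexity as defined does not cover the empty disjunction, and that degenerate case ($T \models \delta_1 \to \neg\delta_2$ outright) must be dispatched separately, again by quantifier-free interpolation.

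Your sketch of \emph{(ii) $\Rightarrow$ (i)} --- which the paper does not prove --- is sound in outline: the consistency of $T \cup \{\delta_1, \delta_2\} \cup \{b_1 \neq t(\ua) \mid t\}$ under failure of~\eqref{eq:ym_cons} does follow by compactness and convexity, and gluing two copies of the resulting model over the substructure generated by $\ua$ via strong amalgamation forces the two copies of $b_1$ to be identified (using the hypothesis $y_1 = y_2$ across the copies) while strong amalgamation forbids it. So the converse direction is fine; the repair needed is confined to the last paragraph of your forward direction.
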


\begin{proof}
 For the sake of completeness, we report the proof of the implication (i) $\Rightarrow$ (ii) (this is the only fact used in the paper). Suppose that $T$ is equality interpolating and has quantifier-free interpolation; we prove that  it is strongly amalgamable.
 If the latter property
  fails, by Robinson Diagram Lemma, there exist models $\cM_1, \cM_2$ of
  $T$ together with a shared submodel $\cA$ such that the set of
  sentences
  $$
  \Delta_{\Sigma}(\cM_1)\cup
  \Delta_{\Sigma}(\cM_2)\cup \{ m_1\neq m_2~\vert~ m_1\in \vert
  \cM_1\vert\setminus \vert \cA\vert, 
  ~ m_2\in \vert \cM_2\vert\setminus \vert \cA\vert\} 
  $$ 
  is not $T$-consistent. By compactness, the sentence
  $$
  \delta_1(\ua, \um_1)\wedge \delta_{2}(\ua, \um_2) \rightarrow
  \bigvee_{n_1\in \um_1, n_2\in \um_2} n_1= n_2
  $$
  is $T$-valid, for some tuples $\ua\subseteq\vert \cA\vert$,
  $\um_1\subseteq(\vert \cM_1\vert\setminus \vert \cA\vert)$,
  $\um_2\subseteq(\vert \cM_2\vert\setminus \vert \cA\vert)$ and for
  some ground formulae $\delta_1(\ua, \um_1),\delta_{2}(\ua, \um_2)$
  true in $\cM_1, \cM_2$, respectively.
  %
  If the disjunction is empty, we get $T\models \delta_1(\ua, \um_1) \to \neg \delta_{2}(\ua, \um_2)$ and then we get a contradiction by the quantifier-free interpolation property (the argument is the same as below).
  Otherwise, by convexity, there are $n_1\in \um_1, n_2\in \um_2$ such that
  $$
  \delta_1(\ua, \um_1)\wedge \delta_{2}(\ua, \um_2) \rightarrow
  n_1=n_2 
  $$
   is $T$-valid.
  By the equality interpolating property, there is a term $t(\ua)$ such that
  $$
  \delta_1(\ua, \um_1)\wedge \delta_{2}(\ua, \um_2) \rightarrow
  n_1=t(\ua) 
  $$
   is $T$-valid.
   By the quantifier-free interpolation property, there is a quantifier-free formula $\theta(\ua)$ such that
   $$
  \delta_1(\ua, \um_1)\wedge n_1\neq t(\ua) \rightarrow
  \theta(\ua) 
  $$
  and 
  $$
  \theta(\ua) \to \neg
  \delta_{2}(\ua, \um_2)
  $$
   are both $T$-valid. Since $n_1\in \vert \cM_1\vert\setminus \vert \cA\vert$, we have that 
   $n_1\neq t(\ua)$ is true in $\cM_1$.
 But then we have a contradiction  because $\theta(\ua)$ is true in $\cM_1$, $\cA$ and in $\cM_2$ as well (truth of quantifier-free formulae  moves back and forth via substructures).
\end{proof}

We underline that Theorem~\ref{thm:strong_amalgamation_syntactic} extends also to the non convex case provided the notion of an equality interpolating theory is suitably adjusted~\cite{bgr-acmtocl}.

Let us now come back to the proof of Lemma~\ref{lem1}.
For proving it, we fixed a convex, stably infinite, equality interpolating, universal theory $T$ admitting a model completion $T^*$ in a signature
$\Sigma$. We fixed also 
\emph{a  $\Sigma$-constraint $\phi(\ux, \uy)$}, 
where we assumed that $\uy=y_1, \dots, y_n$.

Since $T$ has a model completion, it has uniform quantifier-free interpolants~by Theorem\ref{prop:qe}, hence  it has also (ordinary) quantifier-free interpolants. 
By Theorem~\ref{thm:strong_amalgamation_syntactic} it is  strongly amalgamable  because it is equality interpolating. In conclusion, \emph{we are allowed to use strong amalgamation in the proof below}.

Recall that for $i=1, \dots, n$, the formula $\mathtt{ImplDef}_{\phi,y_i}^T(\ux)$ was defined as the quantifier-free formula equivalent in $T^*$ to the formula
\begin{equation*}
 \forall \uy\, \forall \uy' (\phi(\ux, \uy) \wedge \phi(\ux, \uy')\to y_i= y'_i)
\end{equation*}
where the $\uy'$ are renamed copies of the $\uy$. 

\vskip 2mm\noindent
\textbf{Lemma}~\ref{lem1} 
\emph{
 Suppose that we are given a  model $\cM$ of $T$ and elements $\ua$ from the support of $\cM$ such that  $\cM\not\models \mathtt{ImplDef}_{\phi,y_i}^T(\ua)$ for all $i=1, \dots,n$. Then there exists an extension $\cN$ of $\cM$ such that 
 for some $\ub\in \vert\cN\vert \setminus \vert \cM\vert$ we have $\cN\models \phi(\ua, \ub)$. 
}

\begin{proof}
 By strong amalgamability, we can freely assume that $\cM$ is generated, as a $\Sigma$-structure, by the $\ua$: in fact, if we prove the statement for the substructure generated by the $\ua$, then strong amalgamability will provide the model we want.
 
 By using the Robinson Diagram Lemma, what we need is to prove the consistency of $T\cup \Delta(\cM)$ with the set of ground sentences  
 $$
 \{\phi(\ua, \ub)\} \cup \{ b_i\neq t(\ua)\}_{t,b_i}
 $$
 where $t(\ux)$ varies over $\Sigma(\ux)$-terms,  the $\ub=b_1, \dots, b_n$ are fresh constants  and $i$ vary over $1, \dots, n$.
 By convexity,\footnote{
 Strictly speaking, convexity says that if, for a set of literals $\phi$ and for a non empty disjunction of \emph{variables}  $\bigvee_{i=1}^n x_i=y_i$, we have $T\models \phi \to \bigvee_{i=1}^n x_i=y_i$, then we have also
 $T\models \phi \to x_i=y_i$ for some $i=1, \dots,n$. If, instead of variables, we have \emph{terms}, the same property nevertheless applies: if we have  $T\models \phi \to \bigvee_{i=1}^n t_i=u_i$, then for fresh variables $x_i, y_i$ we get $T\models \phi \wedge \bigwedge_{i=1}^n (x_i= t_i \wedge y_i=u_i)\to \bigvee_{i=1}^n x_i=y_i$, etc.
 } this set is inconsistent iff there exist a term $t(\ux)$ and $i=1, \dots, n$ such that
 $$
 T\cup \Delta(\cM)\vdash \phi(\ua, \uy)\to y_i=t~~.
 $$
 This however implies that $T\cup \Delta(\cM)$ has the formula
 $$\forall \uy\, \forall \uy' (\phi(\ua, \uy) \wedge \phi(\ua, \uy')\to y_i= y'_i)$$
 as a logical consequence.
 If we now embed $\cM$ into a model $\cN$ of $T^*$, we have that $\cN\models\mathtt{ImplDef}_{\phi,y_i}^T(\ua)$, which is in contrast to $\cM\not\models \mathtt{ImplDef}_{\phi,y_i}^T(\ua)$ (because $\cM$ is a substructure of $\cN$ and  $\mathtt{ImplDef}_{\phi,y_i}^T(\ua)$ is quantifier-free).
\end{proof}

The following Lemma will be useful in the next Subsection:

\begin{lemma}\label{lem:app} Let $T$ have a model completion $T^*$ and let the constraint $\phi(\ux,\uy)$ be of the kind 
$\alpha(\ux)\wedge \phi'(\ux,\uy)$, where $\uy=y_1,\dots, y_n$. 
 Then for every $i=1, \dots, n$, the formula $\mathtt{ImplDef}_{\phi,y_i}^T(\ux)$ is $T$-equivalent to $\alpha(\ux)\to \mathtt{ImplDef}_{\phi,y_i}^T(\ux)$.
\end{lemma}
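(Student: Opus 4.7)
\vskip 2mm

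The plan is to unfold the definition and rely on the fact that $T$ and $T^*$ agree on universal sentences. First I would recall that, by definition, $\mathtt{ImplDef}_{\phi,y_i}^T(\ux)$ is a quantifier-free formula which is $T^*$-equivalent to the universal formula
\begin{equation*}
\forall\uy\,\forall\uy'\,(\phi(\ux,\uy)\wedge \phi(\ux,\uy')\to y_i=y'_i).
\end{equation*}
The target equivalence $\mathtt{ImplDef}_{\phi,y_i}^T(\ux)\leftrightarrow(\alpha(\ux)\to \mathtt{ImplDef}_{\phi,y_i}^T(\ux))$ is propositionally the same as $\neg\alpha(\ux)\to \mathtt{ImplDef}_{\phi,y_i}^T(\ux)$, so it is this latter implication that I would focus on.

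Next I would verify that the implication $\neg\alpha(\ux)\to \mathtt{ImplDef}_{\phi,y_i}^T(\ux)$ holds in every model of $T^*$. The reason is trivial: whenever $\alpha(\ux)$ fails, the conjunct $\alpha(\ux)\wedge\phi'(\ux,\uy)$ of $\phi(\ux,\uy)$ is false for all tuples $\uy$, so the antecedent $\phi(\ux,\uy)\wedge\phi(\ux,\uy')$ of the universal formula above is unsatisfiable, and the universal formula is vacuously true. Combined with the definition of $\mathtt{ImplDef}$, this shows $T^*\vdash \neg\alpha(\ux)\to \mathtt{ImplDef}_{\phi,y_i}^T(\ux)$, hence $T^*\vdash \mathtt{ImplDef}_{\phi,y_i}^T(\ux)\leftrightarrow(\alpha(\ux)\to \mathtt{ImplDef}_{\phi,y_i}^T(\ux))$.

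Finally, to transfer this equivalence from $T^*$ to $T$ I would use the fact that, since $T^*$ is a model completion of $T$, every model of $T$ embeds into a model of $T^*$, so $T$ and $T^*$ prove the same universal sentences. The universal closure of the displayed biconditional is itself a universal sentence (both sides are quantifier-free), so its provability in $T^*$ entails its provability in $T$, yielding exactly the claim.

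The proof is essentially a one-line observation once the definitions are laid out, so there is no real obstacle; the only point that deserves a brief justification is the passage from $T^*$ to $T$ for quantifier-free (hence universally closed) equivalences, which is standard for model completions.
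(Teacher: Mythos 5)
Your proof is correct and takes essentially the same route as the paper's: both arguments boil down to the observation that $\forall \uy\,\forall \uy'\,(\phi(\ux,\uy)\wedge\phi(\ux,\uy')\to y_i=y'_i)$ is vacuously true whenever $\alpha(\ux)$ fails (the paper performs this as a ``logical manipulation'' on the quantified formula before eliminating quantifiers, you perform it semantically after), and both then transfer the resulting quantifier-free equivalence from $T^*$ to $T$ using that the two theories prove the same universal sentences. No gaps.
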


\begin{proof}
 According to~\eqref{eq:def}, the formula  $\mathtt{ImplDef}_{\phi,y_i}^T(\ux)$ is obtained by eliminating quantifiers in $T^*$ from 
 \begin{equation}\label{eq:lemapp}
   \forall \uy\, \forall \uy' (\alpha(\ux)\wedge \phi'(\ux, \uy) \wedge \alpha(\ux)\wedge \phi'(\ux, \uy')\to y_i= y'_i)
 \end{equation}
The latter is equivalent, modulo logical manipulations, to 
\begin{equation}\label{eq:lemapp1}
   \alpha(\ux)\to\forall \uy\, \forall \uy' (\phi'(\ux, \uy) \wedge \phi(\ux, \uy')\to y_i= y'_i)
 \end{equation}
 whence the claim (eliminating quantifiers in $T^*$ from~\eqref{eq:lemapp} and~\eqref{eq:lemapp1}
gives  quantifiers-free $T^*$-equivalent formulae, hence also $T$-equivalent formulae because $T$ and $T^*$ prove the same quantifier-free formulae).
 \end{proof}

\subsection{Detailed Proof of Lemma~\ref{lem:toterminal}}
 
 \vskip 2mm\noindent
 \textbf{Lemma}~\ref{lem:toterminal} 
\emph{
 Every working formula is equivalent (modulo $T_1\cup T_2$) to a disjunction of terminal working formulae.
}

\begin{proof}
 To compute the required terminal working formulae, it is sufficient to apply the following non-deterministic procedure (the output is  the disjunction of all possible outcomes). 
The non-deterministic procedure applies one of the following alternatives.
\begin{description}
 \item[{\rm (1)}] Update $\psi_1$ by adding to it a disjunct from the DNF of 
 $\bigwedge_{e_i\in \ue} \neg 
 \mathtt{ImplDef}_{\psi_1,e_i}^{T_1}(\ux,\uz)$
 and $\psi_2$ by adding to it a disjunct from the DNF of 
 $\bigwedge_{e_i\in \ue} \neg \mathtt{ImplDef}_{\psi_2,e_i}^{T_2}(\ux,\uz)$;
 \item[{\rm (2.i)}] Select $e_i\in \ue$ and $h\in\{1,2\}$; then update $\psi_h$ by adding to it a disjunct $L_{ij}$ from the DNF 
 of $\mathtt{ImplDef}_{\psi_h,e_i}^{T_h}(\ux,\uz)$; the equality $e_i= t_{ij}$ (where $t_{ij}$ is the term mentioned in Lemma~\ref{lem2})\footnote{ 
 Lemma~\ref{lem2} is used taking as $\uy$ the tuple $\ue$, as $\ux$ the tuple $\ux,\uz$, as  $\phi(\ux, \uy)$ the formula $\psi_h(\ux, \uz,\ue)$ and as $\psi$ the formula $\psi_{3-h}$.
 }
 is added to $\mathtt{ExplDef}(\uz, \ux)$; the variable $e_i$ becomes in this way part of the defined variables.  
\end{description}
If alternative (1) is chosen, the procedure stops, otherwise it is recursively applied again and again: we have one truly existential variable less after applying alternative (2.i), so the procedure terminates, since eventually either no truly existential variable remains or alternative (1) is applied. The correctness of the procedure is due to the fact that  
the following formula is trivially a tautology: 
 \begin{eqnarray*}
 & 
 \left (\bigwedge_{e_i\in \ue} \neg \mathtt{ImplDef}_{\psi_1,e_i}^{T_1}(\ux,\uz)\wedge \bigwedge_{e_i\in \ue} \neg \mathtt{ImplDef}_{\psi_2,e_i}^{T_2}(\ux,\uz) \right ) \vee
 \\ 
 &
 \vee ~\bigvee_{e_i\in \ue} \mathtt{ImplDef}_{\psi_1,e_i}^{T_1}(\ux,\uz) \vee
 \bigvee_{e_i\in \ue} \mathtt{ImplDef}_{\psi_2,e_i}^{T_2}(\ux,\uz)~~~~~~
 \end{eqnarray*}
The  first disjunct is used in alternative (1), the other disjuncts in alternative (2.i). 
At the end of the procedure, we get a terminal working formula. Indeed, if no truly existential variable remains, then the working formula is trivially terminal.
It remains to prove that the working formula obtained after applying alternative (1) is indeed terminal.
Let $\psi'_k$ (for $k=1,2$) be the formula obtained from $\psi_k$ after applying alternative (1). We have that $\psi'_k$ is $\alpha(\ux,\uz)\wedge \psi_k(\ux,\uz,\ue)$, where $\alpha$ is a disjunct of the DNF of $\bigwedge_{e_i\in \ue} \neg 
 \mathtt{ImplDef}_{\psi_k,e_i}^{T_k}(\ux,\uz)$. We need to show that 
 $T_k\vdash \psi'_k\to \neg\mathtt{ImplDef}_{\psi'_k,e_j}^{T_k}(\ux,\uz)$  for every $j$. Fix such a $j$; according to Lemma~\ref{lem:app}, we must show that 
 $$
 T_k\vdash \alpha(\ux, \uz)\wedge \psi_k(\ux,\uz, \ue)\to \neg(\alpha(\ux, \uz)\to\mathtt{ImplDef}_{\psi_k,e_j}^{T_k}(\ux,\uz))
 $$
 which is indeed the case because $\alpha(\ux,\uz)$ logically implies $\neg\mathtt{ImplDef}_{\psi'_k,e_j}^{T_k}(\ux,\uz)$ being a disjunct of the DNF of  $\bigwedge_{e_i\in \ue} \neg 
 \mathtt{ImplDef}_{\psi_k,e_i}^{T_k}(\ux,\uz)$.
\end{proof}

\subsection{Proof of Lemma~\ref{lemma6}}\label{app:lemma6}

 \vskip 2mm\noindent
 \textbf{Lemma}~\ref{lemma6} 
\emph{
 Let $T$ be stably infinite with respect to a subset $\cS$ of the set of sorts of the signature of $T$. Let $\cM$ be a model of $T$ and  let, for every $S\in \cS$,  $X_S$ be a 
 superset of  $S^\cM$.
 Then there is an extension $\cN$ of $\cM$  such that for all $S\in \cS$ we have $S^\cN\supseteq X_S$. 
}
\begin{proof}
 Let us expand the signature of $T$ with the set $C$ of fresh constants (we take one constant for every $c\in X_S\setminus S^\cM$). We need to prove the $T$-consistency of $\Delta(\cM)$ 
 with a the set $D$ of disequalities asserting that all $c\in C$ are different from each other and from the names of the elements of the support of $\cM$. 
 By compactness, it is sufficient to ensure the $T$-consistency of $\Delta_0 \cup D_0$, where  $\Delta_0$ and $ D_0$ are finite subsets of $ \Delta(\cM)$ and $D$, respectively. 
 Since $\cM\models \Delta_0$, this set is $T$-consistent 
 and hence it is satisfied in a  $T$-model $\cM'$ 
 where all the sorts in $\cS$ are interpreted as infinite sets; in such $\cM'$, 
 it is trivially seen that we can interpret also the constants occurring in $D_0$ so as to make $D_0$ true too.  
\end{proof}

\end{document}